\newtheorem{theorem}{Theorem}
\newtheorem{example}{Example}
\newtheorem{definition}{Definition}
\newtheorem{lemma}{Lemma}
\newtheorem{proposition}{Proposition}
\newtheorem{corollary}{Corollary}
\newcommand{\N}{F(S)}
\title{Complete Trigger Selection in Satisfiability modulo First-order Theories}
\author{ \href{https://orcid.org/0000-0003-1141-0665}{\includegraphics[scale=0.06]{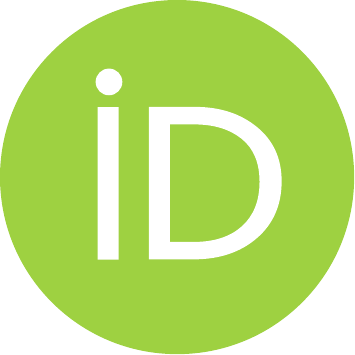}\hspace{1mm}Christopher Lynch} \\
	Department of Computer Science\\
	Clarkson University\\
    8 Clarkson Avenue\\
	Potsdam, NY 13699-5815 \\
	\texttt{clynch@clarkson.edu} \\
	\And
	\href{https://orcid.org/0009-0001-8679-036X}{\includegraphics[scale=0.06]{orcid.pdf}\hspace{1mm}Stephen Miner} \\
	Department of Computer Science\\
	Clarkson University\\
    8 Clarkson Avenue\\
	Potsdam, NY 13699-5815 \\
	\texttt{minersj@clarkson.edu} \\
}
\begin{document}
\maketitle

\keywords{
  SMT \and
  Triggers \and
  Quantifier Instantiation \and
  First-order Theorem Proving}


\begin{abstract}
Let T be an SMT solver with no theory solvers except for Quantifier Instantiation.  Given a set of first-order clauses S saturated by Resolution (with a valid literal selection function) we show that T is complete if its Trigger function is the same as the literal selection function.  So if T halts with a ground model G, then G can be extended to a model in the theory of S.  In addition for a suitable ordering, if all maximal literals are selected in each clause, then T will halt on G, so it is a decision procedure for the theory S.  Also, for a suitable ordering, if all clauses are Horn, or all clauses are 2SAT, then T solves the theory S in polynomial time. 
\end{abstract}

\section{Introduction}
SMT solvers \cite{DBLP:series/faia/BarrettSST21} are very efficient at satisfiability problems over several theories with specialized decision procedures.  For first-order theories where a specialized decision procedure has not been implemented, a background theory can often be represented by quantified first-order clauses.\footnote{In this paper, the word "theory" refers to a satisfiable set of first-order clauses.}  
The SMT solver instantiates universally quantified first-order clauses into ground clauses, which can be handled by its SAT solver.  To decide which instances are useful the SMT solver can use a process called {\em triggers} \cite{DBLP:journals/jacm/DetlefsNS05,DBLP:conf/cade/MouraB07}.  A trigger function maps each first-order clause to a set of terms in the clause.  If the terms in this set match existing ground terms, that triggers an instantiation.      


Researchers have studied practical methods of selecting triggers.  If triggers are selected well, the SMT solver can quickly solve unsatisfiable problems.
However, if the problem is satisfiable, the SMT solver will often run forever or halt with a partial propositional model. If the SMT solver halts with a partial propositional model, it will not know if that propositional model will extend to a model of the first-order clauses.  

This paper is a result of our initial efforts to understand in what instances an SMT solver can be assured that enough instances have been generated to determine satisfiability.
The subject  is first-order logic without equality.  We are motivated by completeness results involving selection functions in resolution-based first-order theorem proving \cite{DBLP:books/el/RV01/BachmairG01}.  We show a relationship between  selection functions and the trigger functions of SMT solvers.
We have started to extend these results to equational logic \cite{DBLP:books/el/RV01/NieuwenhuisR01},
and future research will be to extend them to specialized theories.

As an example of the problem faced by SMT solvers, consider the following first-order theory represented by clauses, where capital letters are universally quantified variables.  This example shows that even if the first-order theory has no disjunction, SMT solvers already have trouble:

\begin{example}
\label{unit}
$$ g(s(X),X) $$
$$ \neg g(X,X) $$
\end{example}

If we give this theory to z3 \cite{DBLP:conf/tacas/MouraB08} and assert $g(a,b)$, z3 returns "unknown" when using the default mbqi (model-based quantifier instantiation) \cite{DBLP:conf/cav/GeM09}.  If we turn off mbqi and set $g(s(X),X)$ and $g(X,X)$ as triggers, z3 will quickly halt and say "unknown".\footnote{We don't mean to pick on z3.  We also ran this on cvc5 \cite{DBLP:conf/tacas/BarbosaBBKLMMMN22}, veriT \cite{DBLP:conf/cade/BoutonODF09} and SMTInterpol \cite{DBLP:conf/spin/ChristHN12}.  They all returned "unknown" or ran forever.}.  The SMT solver will have generated enough instances to determine satisfiability, but it is not aware of that.  SMT solvers do well with conjunctive normal form problems without uninterpreted function symbols, but may have trouble with satisfiable problems with uninterpreted function symbols.

 We now consider a first-order theory that contains disjunction, 
 to discuss trigger selection:

\begin{example}
\label{usedlater}
$$ C_1:  \neg p(X_1,Y_1) \vee q(f(X_1), Y_1) $$
$$ C_2:  \neg q(X_2,Y_2) \vee p(X_2,f(Y_2)) $$
\end{example}

This is another theory that z3 cannot solve when presented with ground clause $p(a,b)$.  The previous example only consisted of unit clauses, so there was no question of which literals to select for triggers.  But in this example, we need to decide which literals to select for triggers.  So we now consider three possible trigger selection strategies.

\begin{enumerate}
    \item If we select $q(f(X_1), Y_1)$ and $p(X_2, f(Y_2))$ as triggers,  we show whenever an SMT solver halts without saying "unsatisfiable", the ground model it has created is actually a model modulo the first-order theory.  In fact, for a fragment of first-order logic to which this theory with this selection function belongs, we show that the SMT solver is a polynomial-time decision procedure.
    \item If we select $\neg p(X_1,Y_1)$ and $\neg q(X_2,Y_2)$ as triggers,\footnote{To reduce instantiation, we use entire literals as triggers.} a halting SMT solver can determine satisfiability.  Unfortunately, given ground clause $p(a,b)$ the procedure will not halt.
    \item If we select $q(f(X_1),Y_1)$ and $\neg q(X_2,Y_2)$, then the SMT solver will return "unknown", because ground clauses $p(a,b)$ and $\neg p(f(a),f(b))$ are unsatisfiable in that theory, but the SMT solver will not generate any instances. However, if we add the first-order clause $\neg p(X_3,Y_3) \vee p(f(X_3),f(Y_3))$ to the first-order theory, and select either literal in that clause, the SMT solver is complete.
\end{enumerate}

In the first trigger selection (and the third one with the extra clause), the SMT solver creates enough instantiations to guarantee satisfiability.  
Unfortunately, the instantiations will not halt for the second one.
For all but the second case, there is an ordering where we selected all the maximal literals in each clause.  For the second case, there is no such ordering.  

Our purpose is not to create a new inference system, but to understand when existing SMT solvers could answer "satisfiable" instead of "unknown", such as the above examples. If a set of first-order clauses is saturated under Resolution with a valid selection function (as defined below), we choose the literals selected during Resolution to be the triggers. If an SMT solver halts with "unsatisfiable", the problem is unsatisfiable. But if the SMT solver halts without detecting unsatisfiability, most SMT solvers would say "unknown". However, using our method of trigger selection, we can know the problem is satisfiable. Furthermore, the partial ground model that the SMT solver has constructed can be extended to a model of the first-order theory.  In Example~\ref{unit}, the first-order theory is saturated.  In Example~\ref{usedlater}, the first-order theory is saturated under all three selection functions, assuming that the additional clause is added in the third case.

If, in addition, a single maximum literal is selected in each clause in the saturation of the first-order theory, the SMT procedure will halt, and therefore the SMT procedure is a decision procedure.  Alternatively, if the order is isomorphic to $\omega$,
  \footnote{There are only finitely many atoms smaller than any given atom.} 
  then selecting all maximal literals will give a decision procedure.  
  If, in addition, the chosen order is a polynomial ordering which is totalizable on ground terms, the SMT solver is guaranteed to decide satisfiability in polynomial time if all clauses are Horn or all clauses contain at most two literals.

In Section 2 of this paper, we give some well-known definitions and some definitions specific to this paper.  In Section 3 we define the inference rules used to model our procedure.  Section 4 proves the completeness.  Section 5 shows cases where we are guaranteed to have a decision procedure and where it is guaranteed to run in polynomial time.  Section 6 gives related work, and Section 7 summarizes
the paper and gives some important future work.

\section{Preliminaries}
We consider a set of ground formulas modulo a set of first-order formulas, which are in conjunctive normal form.
We follow standard definitions for Resolution 
theorem proving \cite{DBLP:books/el/RV01/BachmairG01}, plus some new definitions that are specific to this paper.

We assume we are given a set of variables, which we represent with capital letters, 
and a
set of uninterpreted function symbols of various arities, represented with lower case letters.  An
arity is a non-negative integer.  {\em Terms} are defined
recursively in the following way:  each variable is
a term, and if $t_1,\cdots,t_n$ are terms, and $f$
is of arity $n \geq 0$, then $f(t_1,\cdots,t_n)$ is a term. 
If $P$ is a predicate symbol of arity $n$, and if 
$t_1,\cdots,t_n$ are terms, then $P(t_1,\cdots,t_n)$
is an {\em atom}.  Any atom or negation of an atom
is a {\em literal}.  A literal is called {\em negative}
if it is negated, and {\em positive} otherwise.  
For all literals $L$, we define 
$\bar{L}$ so that
$\bar{L} = \neg L$ and $\bar{\neg L} = L$.
A {\em clause} is a multiset of literals, representing a disjunction of literals.  If $L$ is a literal, and $\Gamma$ is a set of literals, we will write $L \vee \Gamma$ to represent $\{L\} \cup \Gamma$.
We use $\bot$ to represent the empty clause.
We will use ``$-$'' to denote
multiset difference.
For any object $C$, define $Vars(C)$ as the set of variables in $C$.  If $Vars(C) = \emptyset$ we say that $C$ is 
{\em ground}, otherwise we say that $C$ is non-ground. 

A {\em substitution} is a mapping from the set of 
variables to the set of terms, which is 
almost everywhere the identity.  
We identify a substitution with its homomorphic
extension.  Composition of substitutions $\sigma$ and $\rho$ is defined so that $X(\sigma\rho) = (X\sigma)\rho$ for all variables $X$.  
If $\theta$ is a substitution then $Dom(\theta)
= \{X \,\,|\,\, X\theta \not= X\}$, and
$Ran(\theta) = \{X\theta \, \, | \, \, X \in Dom(\theta)\}$.
A substitution $\theta$ matches $A$ to $B$ if $A\theta = B$, and
is a {\em unifier} of
$A$ and $B$, if $A\theta = B\theta$.
$\sigma$ is a {\em most general unifier} of $A$ and $B$, written
$\sigma = mgu(A,B)$ if $\sigma$ is a unifier of $A$ and $B$,
and for all unifiers $\theta$ of $A$ and $B$, there is a 
substitution $\rho$ such that $X\sigma\rho = X\theta$ for all $X$ in $Vars(A \cup B)$.
Given a clause $C$, define $Gr(C) = \{C\theta | C\theta$ is
ground $\}$.  Given a set of clauses $S$, let $Gr(S) = 
\bigcup_{C \in S} Gr(C)$. 

We assume an ordering $<$ is a well-founded ordering which is {\em stable},
meaning that if $s < t$ then $s\theta < t\theta$.
We assume the ordering is totalizable on all ground terms and atoms.  This means the ordering can be extended to an ordering that is total on ground terms.
It can be extended to literals in
any way such that $A < \neg A$ for all atoms $A$.  We also assume the ordering is an {\em atom ordering} meaning that for all literals $L$ and $M$, $L > M$ implies $L > \bar{M}$.
Clauses are compared
using the multiset ordering.  
A literal $L$ is said to be
{\em maximum} in a clause $C$ if $L$ is larger than all other literals
in $C$, and {\em maximal} in $C$ if no other literal
in $C$ is larger than $L$.  An order is a {\em polynomial ordering} if each atom only has polynomially many smaller atoms.  


An {\em partial interpretation} (or just {\em interpretation}) $I$ is defined as a consistent set of ground literals such that $I \models L$ if and only if $L \in I$.  Therefore, an atom $A$ is undefined in $I$ if $A \not\in I$ and $\neg A \not\in I$.  This differs with some definitions of interpretations where just the true positive literals are given.
Since clauses are multisets representing disjunctions, $I \models C$ if $I \cap C \not= \emptyset$, otherwise $C$ is either false or undefined in $I$.  
If $C$ is not ground then we say $I \models C$ if $I \models Gr(C)$.
If $I$ is an interpretation and $S$ is a set of clauses, then $I$
is a {\em model} of $S$ if $I \models C$ for all $C \in S$. 
Interpretations $I_1$ and $I_2$ are {\em compatible} if there is no literal $L$ such that $L \in I_1$ and $\bar{L} \in I_2$.  If $I_1$ and $I_2$ are compatible then $I_1 \cup I_2$ is also an interpretation, furthermore, for any literal $L$, $I_1 \cup I_2 \models L$ if and only if $I_1 \models L$ or $I_2 \models L$.

Given an interpretation $I$ and a ground clause $C$, let $Filter(C,I) = \{ L \in C \,\,|\,\, I \not\models\bar{L} \}$.  If $S$ is a set of ground clauses, let $Filter(S,I) = \{ Filter(C) \,\,|\,\, C \in S \, I\not\models C\}$.  i.e., $Filter(S,I)$ 
is created from $S$ by removing all clauses true in $I$, and then removing all literals false in $I$ from the remaining clauses.

\begin{example}
Consider interpretation $I = \{\neg p(a), p(b)\}$ where $S$ is the set of clauses in the following example:
    $$C_1: p(a) \vee \neg p(b) \vee p(c)$$
    $$C_2: \neg p(a) \vee \neg p(b) \vee p(d)$$
\end{example}

Then $Filter(C_1,I) = p(c)$,  
and $Filter(S,I)$
is the set consisting of the unit clause $p(c)$.


\section{Inference System}
We want to model an SMT solver without any theories except for a quantified first-order theory in CNF, represented by clauses with universal variables.  Given a set of clauses $S$, let $g(S)$ be the set of all ground clauses in $S$, and let $ng(S)$ be the set of all non-ground clauses in $S$.
An SMT solver would build a model from $g(S)$. 
Call that model $M_{g(S)}$.  For most of this paper, it will not be important how that model is built.  
%
%
%
%
%
The SMT solver will use $M_{g(S)}$ to instantiate the clauses of $ng(S)$.  We will use inference rules to model the instantiation process.

Let $Trig$ be a function so that for each clause $C$ in $ng(S)$,
$Trig(C)\subseteq C$ and $Vars(Trig(C)) = Vars(C)$, which
determines which parts of $C$ are used for instantiation.  
Below we show how to choose triggers in such a way that when we have a model, and no more instantiations can be performed, we can deduce that $S$ is satisfiable.  
The Instantiation rule is used to instantiate non-ground clauses based on a ground interpretation $I$. 

\vspace{0.1cm}

\textbf{$I$-Instantiation:} \[\frac{L_1 \vee \cdots \vee L_n \lor \Gamma}{(L_1\vee \cdots \vee L_n \vee \Gamma)\theta}\] \\
where 
\begin{enumerate}
    \item $L_1 \vee \cdots \vee L_n \vee \Gamma \in ng(S)$,
    \item $Trig(L_1 \vee \cdots \vee L_n \vee \Gamma) = \{L_1, \cdots , L_n\}$
     \item there exists ${L'_1} \cdots {L'_n}$ in $I$ such that $\bar{L_i}\theta = {L'_i}$ for all $1 \leq i \leq n$
\end{enumerate}

We do not consider equality, so 
we only require $\theta$ to be a matcher, not an $E$-matcher.  
SMT solvers allow triggers to be subterms of a literal.  
But to reduce the number of instantiations,
we only use entire literals as triggers.  Furthermore, we only need to match a ground literal in the model onto the complement of a non-ground literal.  This allows the instantiation rule to be more restrictive than is usually the case for trigger-based instantiation in SMT.  Finally, SMT solvers allow for different possible sets of triggers for the same clause.  We only require one set of triggers for each clause.

We say that a set of clauses $S$ is {\em saturated by Instantiation} if either $g(S)$ is unsatisfiable or there exists some model $M_{g(S)}$ of $g(S)$ such that every conclusion of an $M_{g(S)}$-Instantiation inference is in $S$.  

\begin{example}
\label{goodsel}
$$ C_1:  \neg p(X_1,Y_1) \vee q(f(X_1), Y_1) $$
$$ C_2:  \neg q(X_2,Y_2) \vee p(X_2,f(Y_2)) $$
$$ C_3: \neg p(f(a),f(b)) $$
\end{example}

We define $Trig_1$ so that 
$Trig_1(C_1) = \{q(f(X_1),Y_1)\}$ and $Trig_1(C_2) = \{p(X_2,f(Y_2))\}$.
Given the model $M_1 = \{\neg p(f(a),f(b))\}$ of $C_3$, we apply the 
Instantiation rule to create the clause $C_4 = \neg q(f(a),b) \vee
p(f(a),f(b))$. Then, we can create a new model $M_2 = \{\neg p(f(a),f(b)), 
\neg q(f(a),b)\}$ of $C_3$ and $C_4$.  Instantiation then creates 
$C_5 = \neg p(a,b) \vee q(f(a),b)$, and we create a new 
model $M_3 = \{\neg p(f(a),f(b)), \neg q(f(a),b), \neg p(a,b)\}$ of $C_3$ and $C_4$.  
These five clauses are now saturated by Instantiation.

Now consider the same set of three clauses with a different trigger function $Trig_2$ defined so that $Trig_2(C_1) = \{q(f(X_1),Y_1)\}$ and $Trig_2(C_2) = \{\neg q(X_2,Y_2)\}$.  Let $M_1$ be the same model as before.  Then there are no instantiations, so the three clauses are saturated by Instantiation.  Further, let's suppose we also had the clause $C_6 = p(a,b)$.  Then the model of the clauses would be $M_4 = \{\neg p(f(a),f(b)), p(a,b)\}$.  The set of clauses $\{ C_1, C_2, C_3, C_6\}$ is again saturated by Instantiation, even though it is unsatisfiable.  In other words, this was not a good choice of triggers.



Consider the theory 
$\{p(X_1), \neg p(X_2)\}$, with no ground clauses.  This is unsatisfiable, but no instantiations exist. So there cannot be a set of triggers that guarantees completeness for all formulas.  To address this problem, we require the non-ground clauses to be saturated under the Factoring and Resolution inference rule defined below.  These inference rules depend on a selection function, which selects the literals in each clause that may be used in an inference.
A selection function maps a clause to a subset of its literals, just like the trigger function.  A selection function $Sel$ is {\em valid} if, for each clause $C$ and for each $T \subseteq Sel(C)$ with $Vars(T) \not= Vars(C)$, either
$Sel(C) - T$ contains all maximal literals in $C - T$, or
$Sel(C) - T$ contains a negative literal.

Before we give an intuition of this definition, let us give some properties:

\begin{proposition}
For any clause $C$ and
valid selection function $Sel$,
$Vars(Sel(C)) = Vars(C)$.
\end{proposition}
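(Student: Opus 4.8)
The plan is to prove the two inclusions separately. Since $Sel(C) \subseteq C$, every variable occurring in a selected literal already occurs in $C$, so $Vars(Sel(C)) \subseteq Vars(C)$ is immediate. The real content of the proposition is the reverse inclusion $Vars(C) \subseteq Vars(Sel(C))$, which I would establish by contradiction: assume $Vars(Sel(C)) \neq Vars(C)$ and derive an impossibility directly from the definition of a valid selection function.

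The key step is to instantiate the validity condition at the one specific choice $T = Sel(C)$. This is legitimate because $T \subseteq Sel(C)$ holds trivially, and by the standing assumption $Vars(T) = Vars(Sel(C)) \neq Vars(C)$, so $T$ is an admissible witness in the definition of validity. The point of this choice is that it makes $Sel(C) - T = Sel(C) - Sel(C) = \emptyset$, which collapses both disjuncts of the validity condition at once. Concretely, validity then forces one of the following: the empty multiset contains all maximal literals of $C - Sel(C)$, or the empty multiset contains a negative literal.

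Neither alternative can hold unless $Vars(Sel(C)) = Vars(C)$. The empty set contains no negative literal, so the second disjunct fails outright. For the first disjunct I would argue that $C - Sel(C)$ must then possess no maximal literal; since every finite nonempty multiset of literals has a maximal element under the well-founded ordering, this forces $C - Sel(C) = \emptyset$, hence $Sel(C) = C$ and therefore $Vars(Sel(C)) = Vars(C)$, contradicting the assumption. The only point I would take care with—and the closest thing to an obstacle here—is the vacuity argument: ``contains all maximal literals of $C - Sel(C)$'' is vacuously true exactly when $C - Sel(C)$ is empty and fails as soon as it is nonempty, so one must explicitly invoke the existence of a maximal element in any nonempty finite multiset. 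The ground case $Vars(C) = \emptyset$ needs no separate treatment, since then no admissible $T$ exists and the desired equality holds trivially; everything else is routine multiset bookkeeping.
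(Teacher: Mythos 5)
Your proof is correct and follows essentially the same route as the paper's: both arguments instantiate the validity condition at the single witness $T = Sel(C)$, so that $Sel(C) - T = \emptyset$ and both disjuncts of the definition fail, yielding the contradiction. Your explicit handling of the vacuity issue---that ``contains all maximal literals of $C - Sel(C)$'' holds trivially when $C - Sel(C) = \emptyset$, a case you rule out because it would force $Sel(C) = C$---is a point the paper's proof silently glosses over, so this extra care patches a minor imprecision rather than constituting a different approach.
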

\begin{proof}
We prove the contrapositive.
Suppose $Vars(Sel(C)) \subset Vars(C)$ (i.e., proper).  Let $T = Sel(C)$.  Then $Vars(T) \not= Vars(C)$.  But $Sel(C) - T$ is empty so it cannot contain all maximal literals in $C - T$ and it cannot contain a negative literal. By the definition of a valid selection function, this is a contradiction.
\end{proof}

\begin{proposition}
Let $Sel$ be a selection function such that for all $C$, $Vars(Sel(C)) = Vars(C)$ and either
    (1) $Sel(C)$ contains only negative literals or
    (2) $Sel(C)$ is a singleton set containing the maximum literal in $C$.
Then $Sel$ is a valid selection function.
\end{proposition}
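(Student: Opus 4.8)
The plan is to verify the definition of a valid selection function directly, by a case split that mirrors the two alternatives (1) and (2) in the hypothesis. Before splitting, I would isolate the one structural fact that drives both cases: for any $T \subseteq Sel(C)$ with $Vars(T) \neq Vars(C)$, the multiset $Sel(C) - T$ is non-empty. This follows immediately from the hypothesis $Vars(Sel(C)) = Vars(C)$, since if $T$ were all of $Sel(C)$ we would get $Vars(T) = Vars(Sel(C)) = Vars(C)$, contradicting $Vars(T) \neq Vars(C)$; hence $T$ is a proper subset of $Sel(C)$ and $Sel(C) - T \neq \emptyset$. I would establish this observation first, as both cases rely on it.

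For case (1), where $Sel(C)$ contains only negative literals, I would fix an admissible $T$ and note that $Sel(C) - T$ is a non-empty sub-multiset of $Sel(C)$; since every literal of $Sel(C)$ is negative, $Sel(C) - T$ contains a negative literal. This is exactly the second disjunct of the validity condition, so it holds for every such $T$.

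For case (2), where $Sel(C) = \{L\}$ with $L$ the maximum literal of $C$, I would first determine which $T$ are admissible. Because $Sel(C)$ is a singleton, $T$ is either $\emptyset$ or $\{L\}$; the choice $T = \{L\}$ yields $Vars(T) = Vars(Sel(C)) = Vars(C)$, which is excluded, so $T = \emptyset$. Then $Sel(C) - T = \{L\}$ and $C - T = C$, and since $L$ is strictly larger than every other literal of $C$ it is the unique maximal literal of $C$. Thus $\{L\}$ contains all maximal literals of $C - T$, which is the first disjunct of the validity condition.

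The argument is short, and I expect the only point requiring care to be the book-keeping on the variable conditions: the entire proof turns on the fact that $Vars(T) \neq Vars(C)$ forces $Sel(C) - T$ to be non-empty in case (1) and forces $T = \emptyset$ in case (2). The remaining subtlety is to check that the case analysis on $T$ is exhaustive and that ``maximum'' is correctly promoted to ``unique maximal'' before invoking the first disjunct.
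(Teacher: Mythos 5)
Your proof is correct and follows essentially the same route as the paper's: in the negative-literal case, the hypothesis $Vars(Sel(C)) = Vars(C)$ forces $T$ to be a proper subset of $Sel(C)$, so a negative literal survives; in the singleton-maximum case, it forces $T = \emptyset$, so $Sel(C) - T = Sel(C)$ contains the (unique) maximal literal of $C - T = C$. Your write-up is in fact slightly more careful than the paper's, which leaves the exhaustiveness of the case analysis on $T$ and the promotion of ``maximum'' to ``unique maximal'' implicit (and contains a typo, writing $D = \emptyset$ where $T = \emptyset$ is meant).
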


\begin{proof}
    Suppose $Sel(C)$ contains only negative literals.  Let $T$ be a subset of $Sel(C)$ with $Vars(T) \not= Vars(C)$.  Then $T \subset Sel(C)$, so $Sel(C) - T$ contains a negative literal.
    Similarly if $Sel(C)$ contains a single maximum literal in $C$ then $D = \emptyset$ so $Sel(C) - T = Sel(C)$, containing all maximal literals in $C$.
\end{proof}

Selection functions normally select
all maximal literals or a negative literal.
We additionally require $Sel(C)$ to contain all the variables in $C$.
We also require that if some of the literals from the selected set are removed from the clause, without covering all the variables, the remaining selected set must contain all maximal literals in the remaining clause or a negative literal.  In the completeness proof, we will filter our clauses by the ground model, and must ensure that the filtered clauses still have a valid selection function.

In Example~\ref{goodsel} with trigger function $Trig_1$, if the selection function is the same as the trigger function, it is easy to construct an ordering where the selected literal is the largest in $C_1$ and $C_2$.  So this is a valid selection function.
If the trigger function is $Trig_2$, the same is true for $C_1$.  For $C_2$, $Trig_2(C_2)$
contains only negative literals, so the selection function is valid. 

Let us look at one more example.  
Consider the following set of clauses, with an ordering where $r(t_1) > q(t_2) > p(t_3)$ for all terms $t_1, t_2, t_3$.

\begin{example}
\label{countersel}
$$ C_1:  \neg p(X_1) \vee \neg q(X_1) $$
$$ C_2:  p(X_2) \vee \neg q(X_2)  $$
$$ C_3:  \neg p(X_3) \vee q(X_3) $$
$$ C_4:   p(X_4) \vee q(X_4) \vee \neg r(Y_4) $$
\end{example}

Let $Sel_1$ be the selection function such that $Sel_1(C_1) = \{\neg p(X_1)\}$, $Sel_1(C_2) = \{\neg q(X_2)\}$, $Sel_1(C_3) = \{q(X_3)\}$, and $Sel_1(C_4) = \{\neg r(Y_4)\}$.  This selection function cannot be valid, because $Sel_1(C_4)$ does not contain all the variables of $C_4$.  So let $Sel_2$ be a selection function identical to $Sel_1$ on the first three clauses, but $Sel_2(C_4) = \{p(X_4),\neg r(Y_4)\}$.  This selection function is also not valid because in clause $C_4$ if we let $D = \{\neg r(Y_4)\}$, then $Sel_2(C_4) - D = \{p(X_4)\}$, and $p(X_4)$ is neither maximal nor negative in $p(X_4) \vee q(X_4)$.  Finally we define $Sel_3$ to be the same as $Sel_1$ on the first three clauses but $Sel_3(C_4) = \{q(X_4),\neg r(Y_4)\}$.  This selection function is valid.

Given a valid selection function, our inference system will consist of three inference rules.  We defined Instantiation above.  Below we define Resolution and Factoring:

\textbf{Resolution:} \[\frac{A \lor \Gamma \quad \neg B \lor \Delta}{(\Gamma \lor \Delta)\sigma}\] 

where
    (1) $A \lor \Gamma \in ng(S)$ ,
    (2) $\neg B \lor \Delta \in ng(S)$, 
     (3) $A$ is selected  in $A \vee \Gamma$, 
     (4) $\neg B$ is selected in $\neg B \vee \Delta$, and
     (5) $\sigma = mgu(A,B)$.

\textbf{Factoring:} \[\frac{A \lor B \lor \Gamma}{(A \lor \Gamma)\sigma}\] 

where
    (1) $A \vee B \vee \Gamma \in ng(S)$ , 
     (2) $A$ is selected  in $A \vee B \vee \Gamma$, and 
     (3) $\sigma = mgu(A,B)$.

When applying Resolution and Factoring, it is important to remove redundant clauses.  In particular, implementations remove subsumed clauses and tautologies.

\begin{definition}
A clause $C$ {\em subsumes} a clause $D$ if there is a substitution $\sigma$ such that $C\sigma \subseteq D$.  A clause $C$ is a {\em tautology} if there is an atom $A$ such that $A \in C$ and $\neg A \in C$.  A set of clauses $S$ (possibly infinite) is {\em saturated} by Resolution and Factoring if the conclusion of every Resolution and Factoring inference in $S$ is either a tautology or is subsumed in $S$. $S$ is {\em completely saturated} if $S$ is saturated by Instantiation and saturated by Resolution and Factoring.
\end{definition}




In Example~\ref{goodsel}, with selection and trigger function $Trig_1$, the set $\{C_1,C_2,C_3,C_4,C_5\}$ is completely saturated.  If the selection and trigger function are $Trig_2$, the set $\{C_1,C_2,C_3\}$ is saturated by Instantiation but not saturated by Resolution and Factoring.  The result of a Resolution between $C_1$ and $C_2$ is $C_7 = \neg p(X_7,Y_7) \vee p(f(X_7), f(Y_7))$.  We extend the selection and trigger function for this new clause.  Suppose we extend $Trig_2$ so that $Trig_2(C_7) = \{p(f(X_7),f(Y_7))\}$.  Then an instantiation will give us $C_8 = \neg p(a,b) \vee p(f(a), f(b))$.  Extending the model to $\{p(a,b), p(f(a), f(b))\}$ allows us to see that $\{C_1,C_2,C_3,C_7,C_8\}$ is completely saturated.

In Example~\ref{countersel}, suppose we add a ground clause $C_5 = r(a)$, then $\{C_1,C_2,C_3,C_4,C_5\}$ is completely saturated under selection function $Sel_2$, because every Resolution inference yields a tautology, even though the set is unsatisfiable. 
So if we had only required valid selection functions to select a negative literal or all maximal literals in each clause, 
we could not prove completeness, even if we additionally required that the selected literals contain all the variables in the clause.  Using selection function $Sel_3$, the set of clauses is not saturated under Resolution and Factoring.

In the appendices we give axioms of set theory and subsumption,
saturated by Resolution and Factoring, since every Resolution inference is a tautology.  
We gave these examples to z3, with a ground clause.
The default mbqi did not halt, and disabling mbqi returned "unknown", while our method returns "sat".




\section{Completeness Proof}
In this section we will prove the completeness of our inference system. 
Given a set of clauses $S$,
the first step in our completeness proof is to filter the ground instances of $ng(S)$ with a model $M_{g(S)}$ of $g(S)$. 
Let $\N = Filter(Gr(ng(S)),M_{g(S)})$.

We explain $\N$ with an example, where we write $f^n(a)$ to abbreviate $f$ applied $n$ times to $a$.  Note that $S$ is not saturated under Instantiation in this example, although in the proof we only construct $\N$ for completely saturated sets.

\begin{example}
    Let $ng(S) = \{ \neg p(X) \vee p(f(X))\}$. Suppose that we have the model 
    $M_{g(S)} = \{\neg p(f(a)), p(f^3(a))$\}
     of $g(S)$.
    Then $Gr(ng(S)) = 
    \{ \neg p(f^n(a)) \vee p(f^{n+1}(a)) \,\,|\,\, n \geq 0\} $.
    So $\N$ is the set of clauses  $\{\neg p(a), p(f^4(a))\} \cup 
    \{ \neg p(f^n(a)) \vee p(f^{n+1}(a)) \,\,|\,\, n \geq 4\} $.    
\end{example}

Instances of subsumed clauses and tautologies in $S$ are also subsumed clauses and tautologies in $\N$, if they exist in $\N$.

\begin{lemma}
Let $D$ be a clause in $S$.  Let $\theta$ be a ground substitution. 
Let $D' = Filter(D\theta,M_{g(S)})$.  Then
    (a) If $D$ is a tautology
    then either $D'$ is not in $\N$ or
    $D'$ is a tautology.
    (2) If $D$ is subsumed in $S$ then either 
    $D'$ is not in $\N$ or there is a clause $C'$ in $\N$
    such that $C' \subseteq D'$.
\end{lemma}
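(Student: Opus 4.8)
The plan rests on two structural facts about the clause-level operator $Filter(\cdot,M_{g(S)})$; write $M$ for $M_{g(S)}$. First, $Filter$ is monotone under multiset inclusion: if $E_1 \subseteq E_2$, then any literal $L$ surviving in $Filter(E_1,M)$, i.e. with $\bar{L} \notin M$, occurs in $E_2$ and survives filtering there as well, so $Filter(E_1,M) \subseteq Filter(E_2,M)$. Second, every clause of $\N$ consists solely of literals undefined in $M$: a ground clause $E \in Gr(ng(S))$ contributes to $\N$ only when $M \not\models E$, so $E$ meets $M$ in no literal, and filtering additionally deletes each $L$ with $\bar{L} \in M$, leaving only undefined literals. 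Consequently, whenever $Filter(E,M)$ retains a literal that is true in $M$ --- equivalently, whenever $M \models E$ --- the resulting clause cannot be a member of $\N$. This is the clean way to certify the ``$D'$ is not in $\N$'' alternative in both parts, and it sidesteps the worry that $D'$ might coincide with an unrelated clause of $\N$.

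For part (a), I would write $D = A \vee \neg A \vee \Gamma$, so that $D\theta$ contains the complementary pair $A\theta$ and $\neg A\theta$, and then case split on the status of $A\theta$ in $M$. If $A\theta \in M$ or $\neg A\theta \in M$, then $M \models D\theta$, so $D' = Filter(D\theta,M)$ retains a literal true in $M$ and therefore is not in $\N$ by the observation above. Otherwise $A\theta$ is undefined in $M$, so neither $\overline{A\theta} = \neg A\theta$ nor $\overline{\neg A\theta} = A\theta$ lies in $M$; hence both $A\theta$ and $\neg A\theta$ survive filtering and $D'$ is again a tautology. This is exactly the dichotomy in the statement.

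For part (b), let $C \in S$ and $\sigma$ witness the subsumption, so $C\sigma \subseteq D$ as multisets, and set $\tau = \sigma\theta$, a ground substitution with $C\tau \subseteq D\theta$. If $M \models C\tau$, then $M \models D\theta$ by inclusion, so $D'$ retains a true literal and is not in $\N$, giving the first alternative. Otherwise $M \not\models C\tau$; taking $C' = Filter(C\tau,M)$, monotonicity yields $C' \subseteq D'$, and it remains to place $C'$ in $\N$. When $C$ is non-ground this is immediate, since $C\tau \in Gr(ng(S))$ and $C\tau$ is unsatisfied by $M$. When $C$ is ground we cannot route $C'$ through $Gr(ng(S))$, but this subcase cannot occur under the second alternative: a ground $C \in S$ lies in $g(S)$, so $M \models C = C\tau$, which forces $M \models D\theta$ and hence the first alternative.

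The monotonicity and tautology facts are routine; the only real care is bookkeeping. I must keep clauses as multisets throughout, so that $C\sigma \subseteq D$ survives application of $\theta$ and the subsequent filtering with correct multiplicities, and I must invoke the membership claim $C' \in \N$ exactly when it is licensed --- which is why the ground-subsumer subcase has to be peeled off and shown to collapse into the first alternative rather than the second. The characterization of $\N$ as consisting purely of undefined literals is what legitimizes each appeal to ``$D'$ is not in $\N$'', so I expect to state that point most carefully; the rest follows by the two one-line structural properties.
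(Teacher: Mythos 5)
Your proposal is correct and follows the same basic strategy as the paper's proof: for (a), split on whether the instantiated tautology atom is defined in $M_{g(S)}$; for (b), compose the subsumption substitution with $\theta$, filter the subsumer's instance, and push the inclusion through the filter. Where you go beyond the paper is in rigor, and usefully so. The paper's proof of (b) defines $C' = Filter(C\theta',M_{g(S)})$ and shows $C' \subseteq D'$, but never verifies the membership $C' \in \N$ that the lemma actually asserts; your observation that a ground subsumer $C$ forces $M_{g(S)} \models C\theta' \subseteq D\theta$, collapsing that subcase into the ``$D'$ not in $\N$'' alternative, is exactly the missing step, since only instances of non-ground clauses can contribute to $\N$. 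Likewise, your structural fact that every clause of $\N$ consists solely of literals undefined in $M_{g(S)}$ is what legitimizes the claims ``$D' \not\in \N$'' in both parts; the paper asserts these without guarding against $D'$ coinciding with the filter of some unrelated clause, which your argument rules out cleanly. So: same route, but your version closes two small gaps that the paper's terser write-up leaves implicit.
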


\begin{proof}
    \begin{enumerate}
        \item Let $D = A \vee \neg A \vee \Gamma$.  If $A\theta \in M_{g(S)}$ or $\neg A\theta \in M_{g(S)}$ then $D' \not\in \N$.  If $A$ is undefined in $M_{g(S)}$ then $D'$ is a tautology.  
        \item Let $C$ be a clause in $S$ and $\sigma$ be a substitution such that $C\sigma \subseteq D$.  Then there is a ground substitution $\theta'$ such that $C\theta' \subseteq D\theta$.  If $D' \in \N$ then there is no literal $L$ in $D\theta$ such that $M_{g(S)} \models \bar{L}$.  So there is no literal $L$ in $C\theta'$
        such that $M_{g(S)} \models \bar{L}$.  
        Let $C' = Filter(C\theta,M_{g(S)})$.  Let $L$ be an arbitrary literal in $C'$.  Then $L$ is undefined in $M_{g(S)}$.  So $L$ is in $D'$.  This implies $C' \subseteq D'$.
    \end{enumerate}
\end{proof}

Subsumption and tautology deletion are instances of the concept of redundancy,\footnote{A clause is redundant if implied by smaller clauses.} 
A $C$ may be redundant in $S$ but $Filter(C)$ not redundant in $\N$,
so our filtering technique does not cover redundancy in full.  However, subsumption and tautology deletion are what is mainly used in practice to control saturation. 

\begin{example}
    Let $S$ be a set of clauses such that 
    $ng(S) = \{ p(X) \vee q(X), \neg q(X), \neg r(X) \vee p(X) \}$
    with $M_{g(S)} = \{ r(a) \}$
    and an ordering such that $p(s) < q(t) < r(u)$ for all terms $s,t,u$.  Then $\neg r(X) \vee p(X)$ is implied by smaller clauses $p(X) \vee q(X)$ and $\neg q(X)$.  But when we apply filtering, we get clauses
    $\{ p(a) \vee q(a), \neg q(a), p(a) \}$, and $p(a)$ is not implied by smaller clauses.
\end{example}

To prove completeness, we will 
let $S$ be a completely saturated set of clauses with $Trig = Sel$.  We show that if $\bot$ is not 
in $S$ and $g(S)$ is satisfiable, then a model of $ng(S)$ can be constructed which is  
compatible with the model of $g(S)$.

First we need some definitions. For a set of clauses $S$,
    let $S_{<C} = \{D \in S \,\,|\,\, D < C\}$ be the set of clauses in $S$ that are smaller than $C$.  
    We will create an interpretation from a set of positive literals.  So, given a set of positive literals $T$ and a set of literals $U$, we define $Int(T,U) = T \cup
    \{ \neg A \,\,|\,\, A \not\in T, 
    (U - T) \cap \{A,\neg A\} \not= \emptyset\}$
    In other words, it is the interpretation where all the atoms in $T$ are true, and every atom in $U$ that has not been made true in $T$ is false.   
For each clause $C \in \N$, we will define $P_{< C}$, $M_{< C}$ and $P_C$ co-recursively. 

\begin{definition}
    Let $\N$ be the clause set defined above, and $C$ be a clause in $\N$. 
    \begin{enumerate}
    \item
    Define $P_{<C}$ as the set of positive literals $\bigcup_{D \in {\N}_{<C}} P_D$, where $P_D$ is defined below.  
    \item 
    Define $P^{\N} = \bigcup_{C \in \N} P_C$, the union of all the $P_C$ defined below.
    \item 
    Let $M_{<C} = Int(P_{<C},C \cup \bigcup {\N}_{<C})$, which means that $M_{<C}$ is the interpretation that makes true all the atoms in $P_{<C}$, and makes false all other atoms in clauses of $\N$ that are smaller than or equal to $C$.
    \item 
    Similarly, let $M^{\N} = Int (P^{\N},\bigcup \N)$.
    \end{enumerate}
%

Simultaneously we define
    $P_C = \{A\}$ for atom $A$ if 
    (1) $M_{<C} \not\models C$, 
    (2) $A$ is the largest literal in $C$,
    (3) $A$ is selected in $C$, and    
    (4) $A$ only occurs once in $C$.
    Otherwise $P_C = \emptyset$.  
    
    If $P_C = \{A\}$, we say that $C$ {\em produces} $A$.
\end{definition}

The completeness proof is similar to the standard proof of completeness of Resolution, except we deal with filtered clauses, so lifting is more complex. Also, we use Instantiation when the filtering removes all the selected literals.  
The next lemma follows from the definition of $P_C$.

\begin{lemma}
\label{inherit}
Let $C$ be a clause in $\N$.  Let $L$ be a literal in $C$.  Then
(1) If $L$ is not maximum in $C$ then $M^{\N} \models L$ if and only if $M_{<C} \models L$.
(2) If $C$ produces $L$ then $L \in M^{\N}$.
\end{lemma}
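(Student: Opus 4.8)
The plan is to reduce both statements to a single fact about produced atoms. Fix $L \in C$ and let $A$ be its atom, so $L \in \{A, \neg A\}$. Because $L$ occurs in $C$, the atom $A$ lies in the universe $C \cup \bigcup {\N}_{<C}$ used to build $M_{<C}$ and in $\bigcup \N$ used to build $M^{\N}$; hence $A$ is defined in both interpretations. Tracing the definition of $Int$, I would observe that $A$ is true in $M_{<C}$ exactly when $A \in P_{<C}$ (and is set false otherwise, since the occurrence of $L$ in the universe forces $\{A,\neg A\}$ to meet the complement of $P_{<C}$), and likewise $A$ is true in $M^{\N}$ exactly when $A \in P^{\N}$. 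Therefore, whether $L$ is $A$ or $\neg A$, the equivalence $M^{\N} \models L \iff M_{<C} \models L$ of part (1) is equivalent to the single claim $A \in P^{\N} \iff A \in P_{<C}$, and part (2) will be read off directly.

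For the claim, the direction $A \in P_{<C} \Rightarrow A \in P^{\N}$ is immediate since $P_{<C} = \bigcup_{D < C} P_D \subseteq \bigcup_{D} P_D = P^{\N}$. The converse is where the hypothesis that $L$ is not maximum in $C$ is used, and this is the step I expect to be the main obstacle. Suppose $A \in P^{\N}$, say $A \in P_D$; then by conditions (2) and (4) in the definition of $P_D$ the producing clause $D$ has $A$ as its unique largest literal, with every other literal of $D$ strictly below $A$. I would show $D < C$ in the multiset order, whence $D \in {\N}_{<C}$ and $A \in P_{<C}$. Since $A$ is the atom of $L \in C$, we have $\max(C) \ge A$. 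If $\max(C) > A$, then $\max(C) > \max(D)$ and $C > D$ follows from the standard property of the multiset order. Otherwise $\max(C) = A$; then $\neg A \notin C$ (it would exceed the maximum), so $L = A$, and since $L = A = \max(C)$ is \emph{not} maximum in $C$, the atom $A$ must occur at least twice in $C$. Comparing $C$ and $D$: after cancelling one shared copy of $A$, the clause $C$ still retains a copy of $A$ while every surviving literal of $D$ is strictly below $A$, so again $C > D$. This disposes of the converse and hence of part (1).

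Part (2) is then immediate from the definitions: if $C$ produces $L$, then $L$ is an atom $A$ with $P_C = \{A\}$, so $A \in P^{\N}$, and since $Int(T,U) \supseteq T$ we get $A \in Int(P^{\N}, \bigcup \N) = M^{\N}$, i.e. $L \in M^{\N}$. The only delicate point in the whole argument is arranging the multiset comparison in part (1) so that the non-maximality of $L$ forces either a strictly larger literal in $C$ than the produced atom $A$, or a repetition of $A$ within $C$; everything else is bookkeeping with $Int$ and the inclusion $P_{<C} \subseteq P^{\N}$.
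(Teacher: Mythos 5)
Your proof is correct, and it is essentially the argument the paper has in mind: the paper gives no proof at all for this lemma (it merely states that it ``follows from the definition of $P_C$''), and what you have written is the standard persistence argument from the Bachmair--Ganzinger model construction that justifies that claim. In particular, your reduction of both parts to the equivalence $A \in P^{\N} \iff A \in P_{<C}$, and the multiset-order case analysis showing that any clause $D$ producing the atom $A$ of a non-maximum literal of $C$ must satisfy $D < C$ (using that $D$'s produced literal is its unique largest literal, versus either a strictly larger literal or a duplicated $A$ in $C$), correctly fills in exactly the details the paper omits.
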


For the proof below, we will assume that for every $C \in S$ and $L \in C$, where $C' = Filter(C\theta,M_{g(S)})$, then $L\theta$ is selected in $C\theta$ if and only if $L\theta \in C'$ and
$L$ is selected in $C$. 

\begin{theorem}
Let $Sel$ be a valid selection function with $Trig = Sel$.
Suppose that $S$ is completely saturated and $\bot \not\in S$ 
and $g(S)$ is satisfiable.  Let $M_{g(S)}$ be a model of $g(S)$ such that $S$ is saturated by $M_{g(S)}$-Instantiation.
Then $M^{\N} \models \N$ where $\N = Filter(Gr(ng(S)),M_{g(S)})$.
\end{theorem}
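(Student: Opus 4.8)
The plan is to argue by contradiction using the well-foundedness of the clause ordering. Suppose $M^{\N} \not\models \N$ and let $C$ be a $<$-minimal clause of $\N$ with $M^{\N} \not\models C$; write $C = Filter(D\theta, M_{g(S)})$ with $D \in ng(S)$ and $\theta$ ground. A first useful observation is that every literal surviving into $C$ must be \emph{undefined} in $M_{g(S)}$: a literal true in $M_{g(S)}$ would make $D\theta$ true and hence cause $Filter$ to delete the whole clause, so it could not lie in $\N$. Thus the filtered-away literals of $D\theta$ are exactly those false in $M_{g(S)}$, and none of $C$'s literals is decided by $M_{g(S)}$.

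First I would rule out the case that $C$ retains no selected literal. If every literal of $Sel(D) = Trig(D)$ has its $\theta$-instance filtered away, then $\bar{L}\theta \in M_{g(S)}$ for each $L \in Trig(D)$, so the Instantiation rule fires on $D$ with $\theta$; since $S$ is saturated by $M_{g(S)}$-Instantiation, the ground clause $D\theta$ lies in $g(S)$ and is therefore satisfied by $M_{g(S)}$. But then $D\theta$ is deleted by $Filter$ and $C \notin \N$, a contradiction. Hence $C$ keeps at least one selected literal. Applying the validity of $Sel$ to $D$ with $T$ the set of selected literals whose instances were filtered, I would then split according to whether $Sel(D) - T$ supplies a surviving negative literal or all maximal literals of $D - T$.

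In the negative case a selected $\neg B \in C$ survives; since $C$ is false, $B \in M^{\N}$, so $B$ is produced by some clause, which lifts to a non-ground clause $D' = B_1 \vee \Delta$ whose selected atom $B_1$ unifies with the atom $B_0$ of the non-ground source of $\neg B$. Resolution on $D$ and $D'$ yields a clause $R$; because $S$ is completely saturated, $R$ is a tautology or subsumed in $S$. I would then take the appropriate ground instance $R\tau$, observe via Lemma~\ref{inherit} that the $\Delta$-part inherited from the producing clause stays false in $M^{\N}$, and note that since $\neg B > B$ dominates every inherited literal, $Filter(R\tau, M_{g(S)})$ is strictly smaller than $C$ and still false in $M^{\N}$; the preceding lemma on tautologies and subsumption then delivers a clause of $\N$ below $C$ that is false in $M^{\N}$, contradicting minimality. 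In the positive case the maximum literal $A$ of $C$ is selected and false; $C$ fails to produce $A$ even though conditions (1)--(3) of production hold, so $A$ must occur more than once, and a Factoring inference on $D$ produces, by the same lemma, a strictly smaller false clause in $\N$ — again a contradiction.

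The step I expect to be the main obstacle is the lifting through $Filter$: concluding that the maximum literal of the \emph{ground} clause $C$ is actually selected. Filtering can delete the ground instance of a $<$-maximal non-ground literal while keeping the instance of a smaller one, so maximality at the ground level need not originate from a maximal literal of $D - T$; this is precisely why the definition of a valid selection function quantifies over all sub-selections $T$ rather than merely requiring that $Sel(D)$ contain a maximal or a negative literal. Making this lifting rigorous — together with the bookkeeping that the filtered resolvent or factor genuinely lands in $\N$ (it is not removed as true in $M_{g(S)}$, and when $R$ is only subsumed one must pass to the filtered instance of the subsuming non-ground clause) — is the technical heart of the argument, and it is here that stability of the ordering and the atom-ordering property $A < \neg A$ do the essential work.
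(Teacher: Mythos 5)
Your proposal follows essentially the same route as the paper's own proof: take a $<$-minimal false clause $C$ of $\N$, note its literals are undefined in $M_{g(S)}$, and run the same three-way split --- Instantiation plus saturation when no selected literal survives filtering, Resolution against the clause that produced the complementary atom when a selected negative literal survives, Factoring when the surviving selection is positive and the maximum literal is duplicated --- with each case closed, via the filtering lemma on tautologies/subsumption and Lemma~\ref{inherit}, by exhibiting a smaller false clause of $\N$, contradicting minimality. The step you single out as the main obstacle, that the maximum literal of the ground clause $C$ is actually selected, is precisely the step the paper settles by bare assertion in its Case 2 (``the largest literal $A$ in $C$ is positive and selected''), so your sketch is no less complete than the paper's argument at that point. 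For what it is worth, when validity does apply the step closes cleanly: with $T$ the set of filtered selected literals, validity (in the negative-free case) gives that every maximal literal of $D-T$ lies in $Sel(D)-T$; but literals of $Sel(D)-T$ survive filtering by the very definition of $T$, so the large deleted literal you worry about could never be maximal in $D-T$ (it would then be selected and hence in $T$), and stability plus finiteness forces the maximum of $C$ to be the $\theta$-instance of a maximal literal of $D-T$, hence selected by the paper's inheritance assumption. One caveat applies equally to your sketch and to the paper's proof: the validity condition only constrains subsets $T$ with $Vars(T) \neq Vars(D)$, so in the residual situation where the filtered selected literals cover all of $Vars(D)$ while some selected literal still survives, neither validity (whose precondition fails) nor Instantiation (which requires \emph{all} triggers to be matched) is available, and Case 2's key claim is left without visible support; you apply validity to your $T$ without checking this precondition, exactly as the paper does implicitly.
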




\begin{proof}
Suppose $M^{\N}$ is not a model of $\N$. Then let $C$ be the smallest clause in $\N$ such that $M^{\N} \not\models C$.  By the above lemma $M_{<C} \not\models C$.  Since $M^{\N} \not\models C$, then $P_C = \emptyset$.  
So let's examine the reasons why $C$ didn't produce anything.  It must be because of one of the following reasons:

\begin{enumerate}
\item There is a selected literal $\neg A$ in $C$ such that $M_{<C} \models A$.
\item Some literal is selected in $C$, but no negative literals are selected.  Then the largest literal $A$ in $C$ is positive and selected, so $A$ occurs twice in $C$, since $A$ was not produced by $C$.
\item There are no selected literals in $C$.
\end{enumerate}

\textbf{Case 1:} There is a selected literal $\neg A$ in $C$ such that $M_{<C} \models A$.

So $C$ is of the form $C=\neg A \vee \Gamma$. Since $C \in \N$,
$C$ must be undefined in $M_{g(S)}$, and 
there must be a clause $C' = \neg A_1 \vee \Gamma_1 \vee \Gamma_2$ in $S$ and a substitution $\theta$ such that $A = A_1 \theta$, $\Gamma = \Gamma_1 \theta$ and $M_{g(S)} \models \neg \Gamma_2 \theta$.\footnote{We assume variables of different (instances of) clauses are disjoint, so the same substitution can be applied to all clause instances}

There must be a clause $D$ in $\N$ which produced $A$.
So $D$ is of the form $A \vee \Delta$, which is undefined in $M_{g(S)}$, and a clause $D' = A_2 \vee \Delta_1 \vee \Delta_2$ in $S$ such that $A = A_2\theta$,
$\Delta=\Delta_1 \theta$, and $M_{g(S)} \models \neg \Delta_2 \theta$. 

Since $A_1\theta = A_2\theta = A$, the following Resolution inference exists in $S$:

 \[\frac{\neg A_1 \lor \Gamma_1 \lor \Gamma_2 \quad A_2 \lor \Delta_1 \lor \Delta_2}{(\Gamma_1 \lor \Gamma_2 \lor \Delta_1 \lor \Delta_2)\sigma}\] 

Since $\sigma = mgu(A_1,A_2)$, $\Gamma_1 \theta \lor \Gamma_2 \theta \lor \Delta_1 \theta \lor \Delta_2 \theta$ is an instance of 
the conclusion of this inference.  Filtering this clause with $M_{g(S)}$ gives us $\Gamma \vee \Delta$.  Since $\Delta < A < \neg A$, this clause is smaller than $C$.  When $D$ produced $A$, it must have been because $M_{<C} \not\models \Delta$, so $M^{\N} \not\models \Delta$.  Therefore $M^{\N} \not\models \Gamma \vee \Delta$, so it cannot be a tautology.  Either $\Gamma \vee \Delta$ is in $\N$ or is subsumed in $\N$ by Lemma~\ref{inherit}.  In both cases, we get a smaller counterexample, a contradiction. 
 
\textbf{Case 2:} The largest literal $A$ is positive and selected, and occurs twice in $C$.

 Therefore, $C$ is of the form $C = A \vee A \vee \Gamma$, and there must be a clause $A_1 \vee A_2 \vee \Gamma_1 \vee \Gamma_2$ in $S$
where $A_1 \theta = A_2 \theta = A$, $\Gamma_1 \theta = \Gamma$, and $M_{g(S)} \models \neg \Gamma_2 \theta$. Since $A_1\theta = A_2\theta$, the following Factoring inference exists in $S$:

 \[\frac{A_1 \lor A_2 \lor \Gamma_1 \lor \Gamma_2}{(A_1 \vee \Gamma_1 \vee \Gamma_2) \sigma}\] 

Since $\sigma = mgu(A_1,A_2)$,
$A\theta \lor \Gamma_1 \theta \lor \Gamma_2 \theta$ is an instance of the conclusion of this inference.  Filtering this with $M_{g(S)}$
gives us $A \lor \Gamma$, which cannot be a tautology.
So it is in $\N$ or is subsumed in $\N$.  In both cases, we get a smaller counterexample, a contradiction. 


\textbf{Case 3:} There are no selected literals in $C$.

There must be a clause $C' \vee \Gamma$ in $S$, where $C'\theta = C$,
$C$ is undefined in $M_{g(S)}$, $M_{g(S)} \models \neg 
\Gamma\theta$, and all of the selected literals of this clause are 
in $\Gamma$.  
Let $\Gamma = \Delta \vee L_1 \vee \cdots \vee L_n$ 
where $L_1 \cdots L_n$ are
the selected literals in $\Gamma$.  
Then there exist literals 
${L_1}', \cdots, {L_n}'$ in $M_{g(S)}$ such that $\bar{L_i}\theta =
{L_i}'$ for all $1 \leq i \leq n$.  So the following Instantiation
inference exists in $S$:

 \[\frac{L_1 \vee \cdots L_n \vee \Delta \vee C'}
        {(L_1 \vee \cdots L_n \vee \Delta \vee C') \theta}\] 

The conclusion of an Instantiation inference is a ground clause. 
So $M_{g(S)} \models L_1\theta \vee \cdots \vee L_n\theta \vee \Delta\theta \vee C'\theta$.
But
$M_{g(S)} \models \neg \Gamma\theta$ and $C$ is undefined in $M_{g(S)}$, 
so this is a contradiction.
\end{proof}

We can combine $M^{\N}$ with $M_{g(S)}$ to get a model of $S$.

\begin{corollary}
\label{completeness}
Let $Sel$ be a valid selection function with $Trig = Sel$.
Suppose that $S$ is completely saturated and $\bot \not\in S$
and $g(S)$ is satisfiable.  Let $M_{g(S)}$ be a model of $g(S)$ such that $S$ is saturated by $M_{g(S)}$-Instantiation.
Let $\N = Filter(Gr(ng(S)),M_{g(S)})$.  Then $M^{\N}$ is compatible with $M_{g(S)}$ and $M^{\N} \cup M_{g(S)} \models S$ 
\end{corollary}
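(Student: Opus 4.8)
The plan is to decompose the statement into its two assertions—that $M^{\N}$ is compatible with $M_{g(S)}$, and that their union models $S$—and to reduce each to facts already in hand. The heavy lifting is done: the preceding Theorem gives $M^{\N} \models \N$, and the Preliminaries record that compatible interpretations combine into an interpretation satisfying $I_1 \cup I_2 \models L$ iff $I_1 \models L$ or $I_2 \models L$. So neither assertion requires a fresh induction.

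For compatibility, I would first trace through the definition of $Filter$. A clause $C\theta$ only contributes to $\N = Filter(Gr(ng(S)),M_{g(S)})$ when $M_{g(S)} \not\models C\theta$, so no literal of such a clause is true in $M_{g(S)}$; and $Filter$ then deletes every literal $L$ with $M_{g(S)} \models \bar{L}$. Hence every literal surviving into a clause of $\N$ is \emph{undefined} in $M_{g(S)}$, so every atom occurring in $\bigcup \N$ is undefined in $M_{g(S)}$. Next I would observe that $M^{\N} = Int(P^{\N},\bigcup \N)$ assigns a truth value only to atoms occurring in $\bigcup \N$: the true atoms lie in $P^{\N} = \bigcup_{C \in \N} P_C$, each produced atom being the largest literal of its producing clause, and the false atoms are drawn from the atoms occurring in $\bigcup \N$ but not placed in $P^{\N}$. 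Therefore the domain of $M^{\N}$ is contained in the atoms left undefined by $M_{g(S)}$, and no literal $L$ can satisfy $L \in M^{\N}$ while $\bar{L} \in M_{g(S)}$. That is compatibility.

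For $M^{\N} \cup M_{g(S)} \models S$, I would apply the union property literal-by-literal and split $S$ into $g(S)$ and $ng(S)$. Each ground clause is already satisfied by $M_{g(S)}$, so its satisfying literal remains true in the union. For a non-ground $C \in ng(S)$ it suffices to satisfy every ground instance $C\theta$. If $M_{g(S)} \models C\theta$ we are done as before. Otherwise $C' := Filter(C\theta,M_{g(S)})$ belongs to $\N$, so the Theorem yields a literal $L \in C'$ with $M^{\N} \models L$; since filtering only deletes literals, $C' \subseteq C\theta$, whence $L \in C\theta$ and the union satisfies $C\theta$. As this covers all ground instances, $M^{\N} \cup M_{g(S)} \models C$, completing $M^{\N} \cup M_{g(S)} \models S$.

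The one step that demands care—and the only place a new argument is needed—is the compatibility claim, specifically pinning down that $Int$ never assigns an atom outside $\bigcup \N$ and that all such atoms are $M_{g(S)}$-undefined. Once that bookkeeping over the $Filter$ and $Int$ definitions is settled, everything else follows mechanically from the Theorem and the union property.
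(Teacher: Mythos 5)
Your proposal is correct and follows essentially the same route as the paper's own proof: compatibility is established by noting that every literal surviving the filter is undefined in $M_{g(S)}$ while $M^{\N}$ only assigns atoms occurring in $\N$, and satisfaction of $S$ follows from the same case split on whether $Filter(C\theta,M_{g(S)})$ lies in $\N$ (apply the Theorem) or not (then $M_{g(S)} \models C\theta$). Your write-up just spells out the $Int$/$Filter$ bookkeeping more explicitly than the paper does.
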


\begin{proof}
    $M^{\N}$ and $M_{g(S)}$ are compatible because all the literals in $\N$ are undefined in $M_{g(S)}$, and $M^{\N}$ only contains literals in $\N$.  So $M^{\N} \cup M_{g(S)}$ is consistent, and is an interpretation. 
    
    Let $C$ be a clause in $S$ and let $\theta$ be a ground substitution.  Let $C' = Filter(C\theta,M_{g(S)})$.  If $C'$ is in $\N$ then $M^{\N} \models C'$ and therefore $M^{\N} \models C\theta$.  If $C'$ is not in $\N$ then $M_{g(S)} \models C\theta$.
    This means that $M^{\N} \cup M_{g(S)} \models ng(S)$.  Since $M_{g(S)} \models g(S)$, then $M^{\N} \cup M_{g(S)} \models S$.
\end{proof}

\section{Decision Procedure and Complexity Results} 
Next we assume the clauses are saturated by Resolution and Factoring, 
and find cases where they can be further finitely saturated where Instantiation only produces finitely many new ground clauses. 
Then SAT Solving plus Instantiation is a decision procedure for the theory of $ng(S)$, because the SAT solver will also produce a model in finite time.  If there are only polynomial many ground clauses produced by Instantiation, and if we have a class of problems where the SAT solver always produces a model in polynomial time, then saturation by Instantiation can be done in polynomial time.  Since we need to look at the SAT solver more closely, we will give an inference system to describe the SAT solver along with the Instantiation rule.

We will model a CDCL SAT solver\cite{DBLP:series/faia/0001LM21}  
as an inference system, with each state represented by 
a triple $<G,M,LC>$, where 
\begin{enumerate}
\item
$G$ is the current set of ground clauses.  This set will be expanded as new clauses are learned.
\item
$M$ is a list of literals, representing the partial interpretation created by the SAT Solver.  If an atom $A$ is in the list, it indicates that $A$ has been set to true.  If $\neg A$ appears in the list, it indicates that $A$ is currently set to false.  The literals in the list are in reverse order of how they were set, i.e., the first literal in the list is the last one set.
\item 
$LC$ is either a singleton set of one clause or the empty clause.
When a conflict is discovered, then $LC$ will contain the conflict clause, which will be continually modified until it becomes the learned clause.  If $LC = \emptyset$ then there is no current conflict.
\end{enumerate}

In our inference system, we will model the rules Decide, Unit Propagate, Backjump and Clause Learning.  We are not modelling Forget and Restart.  The Decide rule sets the value of a literal if no other rule applies.  We assume it will always set a literal to false, as many SAT solvers do. 


If $M$ is a list of literals $[L_n,\cdots,L_1]$, we write $M[L_i\cdots]$ to refer to $[L_i,\cdots,L_1]$.
Given $M$, we define a function $Count_M$ from literals to positive integers so that if $L \in M$ then $Count_M(L) = |M[L\cdots]|$ otherwise $Count_M(L) = \omega$.  
We define an ordering $\leq_M$ so that $L \leq_M L'$ if $Count_M(L) \leq Count_M(L')$.  In other words, $L \leq_M L'$ if the truth value of $L$ was determined before the truth value of $L'$ or if $L'$ is undefined. 
Let $<_M$ be the strict version of $\leq_M$.

We define a function $Sort_M$ that maps each clause $C$ to a permutation of $C$ that is sorted in descending order according to $\leq_M$.  For a set of clauses $G$, $Sort_M(G) = \{Sort_M(C) \,\,|\,\, C \in G\}$.  


We define a function called $Level_M$, mapping literals to non-negative integers so that if $L \in M$ then $Level_M(L)$ is the number of literals in $M[L\cdots]$ that were added to $M$ by the Decide rule.  If $L \not\in M$ then $Level_M(L) = \omega$.


Lists are represented with square brackets. The empty list is $[]$.  We use the colon to add an element to a list. 


The initial state of our inference system is $<g(S),
[],\emptyset>$.  We write inference rules in the form 
$<G,M,LC> \Rightarrow <G',M',LC'>$, to represent the 
fact that the SAT solver can move from the first state 
to the second state.  The SAT solver is don't care 
nondeterministic, in the sense that there is no need 
for backtracking.
The inference rules are as follows:

\textbf{Decide:}
$<G,M,\emptyset>$ 
$\Rightarrow$ 
$<G,\neg A : M, \emptyset>$
\hspace{1cm}
where 
\begin{enumerate}
\item 
There is no clause $L \vee \Gamma$ in $Sort_M(G)$ such that $M \models \neg \Gamma$.
\item
$A \not\in M$ and $\neg A \not\in M$.
\end{enumerate}

The conditions enforce that Decide is only applied when none of the rules below are applicable, except for possibly Instantiate.  Note that a negative literal is always set true.  The definition of $Level_M$ ensures that the Decide rule increases the level of literals. 

\textbf{Propagate:}
$<G,M,\emptyset>$ 
$\Rightarrow$ 
$<G,L : M, \emptyset>$
\hspace{1cm}
where 
\begin{enumerate}
\item 
There is a clause $L \vee \Gamma$ in $Sort_M(G)$ such that $M \models \neg \Gamma$.  We say that $L \vee \Gamma$ produced $L$ in $M$.
\item
$L \not\in M$ and $\bar{L} \not\in M$.
\end{enumerate}

\textbf{Conflict:}
$<G,M,\emptyset>$ 
$\Rightarrow$ 
$<G,M, \{C\}>$
\hspace{1cm}
where 
\begin{enumerate}
\item 
$C \in G$.
\item 
$M \models \neg C$.
\end{enumerate}

\textbf{Backjump:}
$<G,M,\{C\}>$ 
$\Rightarrow$ 
$<G,M, \{\Delta \vee \Gamma\}>$
\hspace{1cm}
where 
\begin{enumerate}
\item 
$Sort_M(C) = \bar{L} \vee \Gamma$.
\item 
There exists $L' \in \Gamma$ such that $Level(L') = Level(L)$.
\item 
$Clause(L) = L \vee \Delta$.
\end{enumerate}

The Backjump rule will be applied repeatedly until 
the literal at the maximal level of the clause is the only literal in the clause at that level.
 Then the clause will be learned.  Backjump and Learn are often defined in terms of an implication graph, but it can also be expressed using Resolution, as we have here.

\textbf{Learn:}
$<G,M,\{C\}>$ 
$\Rightarrow$ 
$<G \cup \{C\},M', \emptyset>$
\hspace{1cm}
where 
\begin{enumerate}
\item 
Backjump does not apply.
\item 
$C \not= \bot$.
\item 
If $C$ is a unit clause then $M' = []$. 
\item
If $Sort_M(C) = L' \vee L \vee  \Gamma$
then $M' = M[\bar{L}\cdots]$.
\end{enumerate}

When we learn the clause, we go back to the place where the second largest literal in the clause was just set to false.

\textbf{Instantiate:}
$<G,M,\emptyset>$ 
$\Rightarrow$ 
$<G \cup \{C\}, M', \emptyset>$
\hspace{1cm}
where 
\begin{enumerate}
\item 
$C$ is the conclusion of an $M$-instantiate inference and $C \not\in G$.

\item 
If $C$ is a unit clause then $M' = []$.
\item
If $Sort_M(C) = L' \vee L \vee  \Gamma$
and $M \models \neg (L \vee\Gamma)$
then $M' = M[\bar{L}\cdots]$. 
\item 
If $Sort_M(C) = L' \vee L \vee  \Gamma$
and $M \not\models \neg (L \vee \Gamma)$
then $M' = M$. 
\end{enumerate}

The Instantiate rule can be applied at any time.  But it must be applied if all the atoms have been given a truth value.  Just like the Learn rule, it backs up to just after the second largest literal has been set to false.  If $M$ does not imply $\neg \Gamma$, there is no need to back up.

\textbf{Succeed:}
$<G,M,\emptyset>$ 
$\Rightarrow$ 
(SAT,M)
\hspace{1cm}
where 
\begin{enumerate}
\item 
There is no clause $C$ in $G$ such that $M \models \neg C$.
\item 
All atoms in $G$ are defined by $M$.
\item 
The Instantiate rule does not apply.
\end{enumerate}
                                                    
\textbf{Fail:}
$<G,M,\bot>$ 
$\Rightarrow$ 
UNSAT

The inference rules terminate when we have determined SAT or UNSAT.  For SAT, we also return a model.

Given a set of clauses $S$, a sequence of inference steps starting with $(g(S),[],\emptyset)$
is called a {\em derivation from $S$}.
If the derivation ends with $(SAT,M)$ or $UNSAT$, it is called a {\em terminating derivation} from $S$.

The following theorem for non-ground clauses is well-known.  It only says that SAT solvers always halt and give the correct answer.

\begin{theorem}
    Let $S$ be a set of ground clauses. Then every sequence of inference steps from $(g(S),[],\emptyset)$is finite.  
    If $S$ is unsatisfiable then every derivation from $S$ terminates with $UNSAT$.  If $S$ is satisfiable then every derivation  from $S$ terminates with $(SAT,M)$, where $M$ is a model of $S$.
\end{theorem}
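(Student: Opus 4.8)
The plan is to treat this as the standard soundness, completeness, and termination argument for an abstract CDCL transition system, specialized to the present rules. Note first that since $S$ is ground we have $ng(S)=\emptyset$, so no $M$-instantiate inference exists and the \textbf{Instantiate} rule never fires; only the propositional rules \textbf{Decide}, \textbf{Propagate}, \textbf{Conflict}, \textbf{Backjump}, \textbf{Learn}, \textbf{Succeed}, \textbf{Fail} are in play, and $g(S)=S$ is finite over finitely many atoms, say $n$.

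First I would establish, by induction on the length of a derivation, the following invariants of every reachable state $\langle G,M,LC\rangle$: (i) $M$ is consistent, i.e. it never contains a literal and its complement; (ii) every clause of $G$ is a logical consequence of $S$, which holds initially since $G=S$, is preserved by \textbf{Decide}, \textbf{Propagate}, and \textbf{Conflict} (which do not change $G$), and is preserved by \textbf{Learn} because each \textbf{Backjump} step replaces the clause in $LC$ by a resolvent of two clauses entailed by $G$, so the learned clause is itself entailed by $G$ and hence by $S$; and (iii) whenever $LC=\{C\}$ we have $M\models\neg C$ and $G\models C$. Invariant (iii) is the crucial bookkeeping: \textbf{Conflict} introduces a clause already false in $M$, and \textbf{Backjump} preserves falsehood because it resolves the most recently assigned literal $L$ against its reason clause $Clause(L)=L\vee\Delta$, whose side literals $\Delta$ were already false when $L$ was propagated. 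These invariants immediately give soundness: if \textbf{Fail} is reached then $LC=\{\bot\}$, so $G\models\bot$ and thus $S$ is unsatisfiable; if \textbf{Succeed} is reached then every atom is defined and no clause of $G\supseteq S$ is false in $M$, so $M$ is a total model of $S$. I would also check that no non-final state is stuck: if $LC=\{\bot\}$ then \textbf{Fail} applies; if $LC=\{C\}$ with $C\ne\bot$ then \textbf{Backjump} or \textbf{Learn} applies; and if $LC=\emptyset$ then either some clause is falsified (\textbf{Conflict}), or some clause is unit (\textbf{Propagate}), or some atom is undefined (\textbf{Decide}), or else \textbf{Succeed} applies. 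Hence a terminating derivation ends only in $(SAT,M)$ or $UNSAT$, and by soundness the answer is correct.

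Next I would prove that each conflict-analysis phase terminates. Within such a phase $M$ is fixed and \textbf{Backjump} repeatedly rewrites the clause in $LC$. Using that this clause is always false in $M$ by invariant (iii), I attach to it the multiset of $Count_M$-values of the complements of its literals and compare these multisets in the well-founded multiset order over $\{1,\dots,|M|\}$. A single \textbf{Backjump} deletes the literal $\bar L$ whose complement $L$ is the most recently assigned on the trail (it is maximal in $Sort_M$) and introduces in its place only literals of $\Delta$, all of whose complements precede $L$; the measure therefore strictly decreases. So after finitely many \textbf{Backjump} steps either the clause becomes $\bot$ or \textbf{Backjump} is no longer applicable and \textbf{Learn} fires.

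\textbf{The main obstacle is global termination} in the presence of learning, since \textbf{Learn} truncates $M$ and a naive measure on $M$ can increase again as the solver re-decides. The plan is the standard two-level argument. Between two consecutive \textbf{Learn} (or terminal) steps the set $G$ is fixed and only \textbf{Decide}, \textbf{Propagate}, and \textbf{Conflict} act; these extend $M$ by one literal each and $M$ has length at most $n$, so each conflict-free stretch is finite and is followed either by \textbf{Succeed} or by a conflict and hence, by the previous paragraph, a \textbf{Learn}. It therefore suffices to bound the number of \textbf{Learn} steps. Here I would use the asserting property of the learned clause: writing $Sort_M(C)=L'\vee L\vee\Gamma$, after backjumping to $M'=M[\bar L\cdots]$ every literal of $C$ except $L'$ is false in $M'$ while $L'$ is undefined, so $C$ immediately forces $L'$ by \textbf{Propagate} at a strictly lower decision level than the conflict. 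Following Nieuwenhuis, Oliveras and Tinelli, I would exhibit a well-founded order on trails, comparing level by level in chronological order the block structure induced by the decision literals, which is strictly decreased across each conflict/learn/backjump cycle by this newly forced assignment, while the monotone growth of $G$ (bounded by the at most $3^n$ possible clauses) prevents the same learned clause from being produced unboundedly often. Combining the finiteness of each conflict-free stretch, the finiteness of each conflict-analysis phase, and the well-foundedness of this trail order yields termination of every derivation; together with soundness and the absence of stuck states, this gives exactly the stated dichotomy.
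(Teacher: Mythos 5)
The paper offers no proof of this theorem at all: it is introduced with ``the following theorem \ldots is well-known'' and immediately used, the proof being delegated to the CDCL literature cited alongside the solver model. So there is no in-paper argument to compare yours against; what you have produced is a full reconstruction of the standard abstract-CDCL correctness proof (in the style of Nieuwenhuis--Oliveras--Tinelli), and it is essentially correct. Your key observations are all sound: $ng(S)=\emptyset$ silences Instantiate; the invariants (consistency of $M$, entailment of $G$ from $S$ via the resolution reading of Backjump, falsity in $M$ of the clause in $LC$) give correctness of both Fail and Succeed; the multiset of $Count_M$-values measure correctly shows each Backjump phase terminates, since the reason clause $\Delta$ contains only literals whose complements precede $L$ on the trail; and the asserting property of the learned clause drives the level-indexed well-founded trail order that bounds the number of Learn steps. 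What your proof buys over the paper's appeal to folklore is that it actually verifies the paper's particular rule formulation, and in doing so it exposes a defect worth flagging: your progress claim (``no non-final state is stuck'') is true only under the intended reading of Decide, namely that it fires whenever no propagation or conflict is pending and some atom is undefined. The paper's literal side condition for Decide is stronger --- Decide is blocked by \emph{any} clause whose non-maximal literals are all false, even one already satisfied by its $\leq_M$-maximal literal --- and under that reading one can reach states (e.g.\ from $G=\{\neg q,\; q\vee p,\; r\vee s\}$ after propagating $\neg q$ and then $p$) where no rule applies, so the theorem as literally formalized fails; the precondition needs to be ``neither Propagate nor Conflict applies.'' Two smaller points: the global-termination step is the one place you argue by citation rather than in full, and the delicate case there is a learned unit clause, where $M'$ is reset to $[]$ and the trail comparison must still register the subsequent level-$0$ propagation as progress; and your side remark that $|G|\leq 3^n$ prevents unbounded re-learning is unnecessary once the trail order is in place (and would not suffice on its own, since the paper's Learn rule does not require $C\notin G$).
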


If $S$ is a set of clauses, and the
non-ground clauses are saturated by Resolution and Factoring, then Instantiation is a decision procedure for $ng(S)$ under certain conditions.  
For example, if the selection function selects a single maximum literal in each clause.

\begin{theorem}
Let $Sel$ be a valid selection function such that a single maximum literal is selected in each clause of $ng(S)$, with $Trig = Sel$.  
    Let $S$ be a set of clauses saturated by Resolution and Factoring.  Then every sequence of inference steps from $(g(S),[],\emptyset)$ is finite.
        If $S$ is unsatisfiable then every derivation from $S$ terminates with $UNSAT$.  If $S$ is satisfiable then every derivation from $S$ terminates with $(SAT,M)$, where $M$ is a model of $S$.
\end{theorem}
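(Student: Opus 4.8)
The plan is to split the statement into termination, correctness on unsatisfiable inputs, and correctness on satisfiable inputs, and to reduce the two correctness parts to soundness of the rules together with Corollary~\ref{completeness}, leaving the finiteness of Instantiation as the real work. For soundness I would first note that $G$ grows monotonically and that every clause ever added to $G$ is a logical consequence of $S$: the clauses of $g(S)$ lie in $S$, each Instantiate conclusion $C\theta$ is a ground instance of a clause of $ng(S)$, and each learned clause is produced by the Resolution steps encoded in Backjump and so follows from $G$, hence inductively from $S$. Thus $S \models G$ at every state. If a derivation reaches $UNSAT$ then $\bot \in G$, so $G$ and therefore $S$ is unsatisfiable; and if $S$ is satisfiable then $G$ is always satisfiable, so Fail never fires. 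A short progress argument — every non-terminal state has an applicable rule (Conflict, Propagate, Decide, Instantiate, or Succeed when $LC=\emptyset$; Backjump, Learn, or Fail when $LC\neq\emptyset$) — shows that a finite derivation must end in $SAT$ or $UNSAT$, which together with termination yields the claimed correspondence between the outcomes and (un)satisfiability.

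For the satisfiable case I would not re-saturate but reuse the completeness argument at the terminating state. Suppose a derivation ends in $(SAT,M)$ with final clause set $G$. Then $M \models G \supseteq g(S)$, every atom of $G$ is defined by $M$, and, because Instantiate is inapplicable, every conclusion of an $M$-Instantiation inference already lies in $G$. Taking $M_{g(S)} := M$ and $\N := Filter(Gr(ng(S)),M)$, I claim the proof of the completeness theorem applies with one reading changed: its Resolution and Factoring cases use only saturation of $ng(S)$ under Resolution and Factoring (hypothesized) together with the filtering Lemma~\ref{inherit}, while its Instantiation case uses only that each relevant $M$-Instantiation conclusion is ground and true in $M$ — which here holds because the conclusion lies in $G$ and $M \models G$, rather than because it lies in $g(S)$. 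Hence $M^{\N} \models \N$, and exactly as in Corollary~\ref{completeness} the union $M \cup M^{\N}$ is a model of $S$; in particular $S$ is satisfiable and the returned $M$ extends to a model of $S$.

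The crux is termination, and it suffices to show that Instantiation can generate only finitely many distinct ground clauses, since for a fixed $G$ the SAT rules terminate by the preceding theorem. The single-maximum-literal hypothesis is what drives this. Writing a non-ground clause as $C = L \vee \Gamma$ with $L$ its unique maximum selected literal, an $M$-Instantiation matches $\bar L\theta = L'$ for some $L' \in M$, and since $Vars(L)=Vars(C)$ the matcher $\theta$ is completely determined by $L'$. By stability $L\theta$ is maximum in $C\theta$, its atom equals that of $L'$ and so already occurs in $G$, and — as $C$ is non-tautological and $L$ occurs once — every other atom of $C\theta$ is strictly smaller in the atom ordering. I would record this as a generation relation on ground atoms with an edge $a \to b$ whenever some $C \in ng(S)$ instantiated off a literal with atom $a$ yields a literal whose (necessarily smaller) atom is $b$. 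This relation is finitely branching ($ng(S)$ is finite and each clause has bounded size) and strictly decreasing in the well-founded ordering, so by K\"onig's Lemma the set $\mathcal{A}$ of atoms reachable from the finitely many atoms of $g(S)$ is finite. An induction over the derivation then shows every reachable $G$ uses only atoms of $\mathcal{A}$: Decide, Propagate and Conflict leave $G$ unchanged, Learn adds resolvents over existing atoms, and Instantiate adds a clause whose maximum atom is already present and whose remaining atoms are in $\mathcal{A}$ by construction. Finitely many atoms bound the number of distinct ground clauses, so only finitely many Instantiate and Learn steps occur and the derivation is finite.

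The main obstacle is precisely this finiteness argument, and two points need care. First, I must assume, as is standard for CDCL, that Decide only assigns atoms occurring in $G$; otherwise the solver could introduce ground literals off which arbitrarily structured new instances are produced, breaking the bound $\mathcal{A}$. Second, finite branching of the generation relation relies on $ng(S)$ being finite, so the statement should be read for a finite saturated input. Granting these, the single-maximum-literal hypothesis is exactly what forces each instantiation to descend in the atom ordering instead of creating larger terms, which is what makes K\"onig's Lemma applicable.
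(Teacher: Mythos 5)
Your proposal is correct and follows essentially the same route as the paper: termination via a finitely branching, strictly decreasing generation structure on ground atoms rooted at the initial ground atoms (the paper phrases this as a finite forest of trees, you as K\"onig's Lemma on a reachability relation), UNSAT correctness by soundness, and SAT correctness by observing that the terminal state is completely saturated and invoking Corollary~\ref{completeness}. Your version is somewhat more careful than the paper's — in particular you make explicit the assumptions that $ng(S)$ is finite and that Decide only assigns atoms occurring in $G$, and you spell out why the completeness argument applies to the terminal clause set — but these are refinements of the same proof, not a different one.
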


\begin{proof}
First we want to prove that the every sequence of inference steps is finite.  We do this by constructing a forest of trees based on the Instantiation rule inferences.  We create a node 
 labelled by each ground atom that appears in the saturation by Instantiation.  
    Since the maximum literal $A$ is selected in each non-ground clause, if an instantiation creates a new ground atom $B\theta$ in $C\theta$ then $B\theta < A\theta$.
    We create an edge from the node labelled $A\theta$ to the node labelled $B\theta$.

   The forest of trees is finite because:
    \begin{enumerate}
        \item There are finitely many trees, since the root of each tree is labelled with an initial ground atom.
        \item If a node is labelled with $A\theta$, then any edge out of this node is created by a clause $C \in ng(S)$ such that $A$ or $\neg A$ is in $C$.  The node at the other end must be labelled with $B\theta$, where $B$ is in $C$.  There are only finitely many possibilities, so each node has a finite number of children. 
        \item Each branch is of finite length.  This is because the ordering is well founded.  
    \end{enumerate}
Since the forest of trees is finite, only finite many Instantiations are performed, and therefore the procedure will halt.

We need to show that the procedure produces the correct answer when it halts.  By soundness, the UNSAT case gives a correct answer.  For the SAT case note that we have constructed a model of all the grounds clauses.  So the clauses are completely saturated.  By Corollary~\ref{completeness}, $S$ is satisfiable, and $M$ is a model of $S$.

\end{proof}

We also get a decision procedure if 
the ordering used is order isomorphic to $\omega$.

\begin{theorem}
Let $Sel$ be a valid selection function such that all maximal literals are selected in each clause, with an ordering that is order isomorphic to $\omega$.
Suppose that $Trig = Sel$ and that 
$S$ is saturated by Resolution and Factoring.
  Then every sequence of inference steps from $(g(S),[],\emptyset)$ is finite.
        If $S$ is unsatisfiable then every derivation from $S$ terminates with $UNSAT$.  If $S$ is satisfiable then every derivation from $S$ terminates with $(SAT,M)$, where $M$ is a model of $S$.
\end{theorem}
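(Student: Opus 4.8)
The plan is to follow the proof of the preceding (single-maximum-literal) theorem essentially verbatim for the correctness claims, and to replace only the termination argument. In that earlier proof the single-maximum hypothesis is precisely what made the instantiation-dependency structure a genuine forest: each new ground atom was produced below one maximum literal, giving each node a single ``parent.'' Once all maximal literals are selected, a single $M$-Instantiation may consume several literals of the current model at once, so a newly produced atom depends on several already-present atoms simultaneously, and the forest/K\"onig argument no longer applies directly. The $\omega$-ordering hypothesis is exactly what lets me recover finiteness by a direct bounding argument instead of a tree.

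First I would record the key consequence of the selection hypothesis: in any clause $C$ every unselected literal is non-maximal (all maximal literals are selected), hence it lies strictly below some maximal, and therefore selected, literal of $C$. Consequently, when $M$-Instantiation fires on $C$ with $Trig(C)=Sel(C)=\{L_1,\dots,L_n\}$ matched so that $\bar{L_i}\theta = L'_i \in M$, every literal of the produced clause other than the matched ones arises from an unselected literal $L$ with $L\theta < L_i\theta = \bar{L'_i}$ for some $i$ (strictness is preserved by stability). Since the ordering is an atom ordering, $L\theta < \bar{L'_i}$ forces $\mathrm{atom}(L\theta) \le \mathrm{atom}(L'_i)$. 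Thus every ground atom occurring in a newly produced clause is $\le$ some atom already present in the current model.

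From this I would run an induction over the derivation to show that every ground atom that ever appears is $\le a_{\max}$, the largest atom occurring in the finite set $g(S)$ (well defined, since the order is total on ground atoms). The base case is immediate; the inductive step uses the inequality above together with the fact that the matched literals $L'_i$ are drawn from the current model, whose atoms appeared earlier and are therefore $\le a_{\max}$ by the induction hypothesis. Learned clauses are Resolution consequences of existing clauses and introduce no new atoms. Because the order is isomorphic to $\omega$, only finitely many atoms are $\le a_{\max}$, so only finitely many ground atoms can ever occur; hence only finitely many distinct ground clauses can be produced, and since Instantiate requires $C\notin G$, only finitely many Instantiate steps occur. Combining this with termination of the CDCL rules on any fixed finite ground clause set (the SAT-solver termination theorem stated earlier in this section), every derivation from $S$ is finite.

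Finally, correctness is exactly as before: the Fail branch is sound, so an $UNSAT$ answer is correct; and when Succeed fires, $M$ satisfies every clause of $G$ and no Instantiation applies, so taking $M_{g(S)} = M$ the set $S$ is completely saturated, whence Corollary~\ref{completeness} gives that $S$ is satisfiable and that $M$ extends to a model of $S$. The main obstacle is the termination step: I must make the atom-bounding induction legitimate even though a new atom can depend on several already-present atoms at once, which is why I phrase the bound as ``$\le$ some atom already in the model'' rather than as an edge of a forest, and why the $\omega$-ordering, rather than mere well-foundedness, is what this theorem genuinely needs.
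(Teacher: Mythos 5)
Your proposal is correct and follows essentially the same route as the paper's own (very terse) proof: the paper likewise observes that every newly created ground atom is bounded by an initial atom, invokes the $\omega$-isomorphism of the ordering to conclude that only finitely many atoms, hence finitely many Instantiations, can occur, and then obtains correctness from complete saturation via Corollary~\ref{completeness}. Your version merely makes explicit what the paper leaves implicit --- the induction over the derivation, the use of stability and the atom-ordering property to pass from literals to atoms, and the reliance on CDCL termination on a fixed finite ground set --- which is a faithful elaboration rather than a different argument.
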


\begin{proof}
This proof is a little simpler than the previous one.  We simply observe that all new atoms created are smaller than an initial atom.  So only finitely many atoms can be created, and then there are only finitely many Instantiations, which means the procedure halts.  When the procedure halts, $S$ is completely saturated, so the SAT and UNSAT results are correct.
\end{proof}

If the selection function selects a negative literal, then the Instantiation rule may not halt.

\begin{example}
    Consider Example~\ref{goodsel} where $Trig(C_1) = \{\neg p(X_1,Y_1)\}$ and $Trig(C_2) = \{\neg q(X_2,Y_2)\}$.  This is saturated by Resolution and Factoring, but Instantiation with ground clause $p(a,a)$ creates infinitely many clauses.
\end{example}

Even if all maximal literals are selected in each clause, there may still be infinitely many instantiations, as in the following theory, with an ordering such that $p(s) > q(t)$ for all $s$ and $t$.

\begin{example}
$$C: \neg p(X) \vee \neg q(Y) \vee q(f(Y))$$  
\end{example}

Suppose that $Sel(C) = \{\neg p(X), \neg q(Y)\}$.  This is a valid selection function, and $p(X)$ is the maximum literal.  But suppose we have $p(c)$ and $q(c)$ in the ground model.  Instantiation will create $q(f^n(c))$ for all $n$, all literals smaller than $p(c)$.

We would also like to determine conditions where 
Instantiation halts in polynomial time, given that $S$ is already saturated by Resolution and Factoring.
This requires that only polynomially many instantiations are computed, and that the SAT solver runs in polynomial time.  

If $ng(S)$ is saturated by Resolution and Factoring, and a derivation has only polynomimally many Instantiation inferences and polynomially many Learn inferences then that derivation is computable in polynomial time.

\begin{lemma}
\label{poly}
    Let $Sel$ be a valid selection function 
and $Trig = Sel$.  Suppose 
$ng(S)$ is saturated by Resolution and Factoring.  If a derivation from $S$ has only polynomially many Instantiation steps and polynomially many Learn steps
then that derivation can be computed in polynomial time. 
\end{lemma}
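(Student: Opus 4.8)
We must show that if $ng(S)$ is saturated by Resolution and Factoring (with a valid selection function, $Trig = Sel$), and a derivation from $S$ uses only polynomially many Instantiation steps and polynomially many Learn steps, then the entire derivation is computable in polynomial time.

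Let me think about what a CDCL derivation consists of. The rules are Decide, Propagate, Conflict, Backjump, Learn, Instantiate, Succeed, Fail. The issue is that derivations could in principle be long even with few Learn/Instantiate steps — we need to bound the total number of inference steps and the cost of each.

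Let me count. The number of distinct ground clauses in $G$ is bounded: we start with $g(S)$, add one clause per Instantiate step (polynomially many) and one per Learn step (polynomially many). So $|G|$ stays polynomial throughout. The number of distinct atoms appearing is polynomial too (bounded by the number of literals in these polynomially many clauses).

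Now between two "state-resetting" steps (Learn, Instantiate, Backjump-to-Learn), what happens? The SAT solver does Decide/Propagate/Conflict. Between learns, the solver builds up a partial model $M$. Each Decide or Propagate adds one literal; since there are polynomially many atoms, at most polynomially many literals get added before either a Conflict or Succeed/termination. A Conflict triggers a sequence of Backjump steps followed by a Learn. So the structure is: a sequence of Decide/Propagate steps, then a conflict analysis (Backjumps + one Learn), repeated.

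The number of Conflict+Learn episodes equals the number of Learn steps (polynomial). Each Backjump step is a single resolution step on a clause; the number of Backjumps within one conflict analysis is bounded by the number of literals in the conflict clause, which is bounded by the (polynomial) clause size. So total Backjump steps are polynomial.

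**Proof plan.**

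\emph{Bounding the data.} The plan is first to observe that $G$ grows only by Learn and Instantiate conclusions. Under the hypothesis these number polynomially many, so $|G|$, the number of distinct atoms $N$ occurring in $G$, and the maximum clause width $w$ all remain bounded by a polynomial in the input size throughout the derivation. I would make this the first lemma-step: the set of atoms ever appearing is polynomial, since each Instantiation conclusion is a ground instance of a fixed non-ground clause and there are polynomially many such conclusions, and each Learn conclusion is a subclause (sub-multiset) of resolvents of existing clauses, introducing no new atoms.

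\emph{Bounding the number of steps between Learns/Instantiates.} Next I would bound the steps that do \emph{not} change $G$, namely Decide, Propagate, Conflict, and Backjump. Between consecutive $G$-changing steps the trail $M$ only grows by Decide and Propagate, each adding one literal over the $N$ atoms; since $M$ never contains an atom and its negation, at most $N$ such steps occur before a Conflict (or before Succeed/Fail). A Conflict is immediately followed by Backjump steps, each of which is one Resolution step shrinking the level-structure of the conflict clause; the number of Backjumps per conflict is at most $w$ (the number of literals in the conflict clause). Hence the total count of Decide, Propagate, Conflict, and Backjump steps is at most $O((\text{\#Learn} + \text{\#Instantiate}) \cdot (N + w))$, which is polynomial.

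\emph{Bounding per-step cost.} Finally I would argue each individual step is computable in polynomial time. The expensive rules are the applicability checks: deciding whether some clause in $Sort_M(G)$ is falsified or unit requires scanning $|G|$ clauses of width $\le w$ and sorting by $\leq_M$ — polynomial. The Instantiate applicability check requires enumerating $M$-Instantiation inferences; since $M$ is a finite interpretation of polynomial size and each trigger is a single literal matched against the complement of a non-ground literal from the finitely many clauses in $ng(S)$, matching is linear and there are polynomially many candidates. Computing $Sort_M$, $Count_M$, $Level_M$, and the $mgu$ in a Backjump/Resolution step are all standard polynomial operations. Multiplying the polynomial step-count by the polynomial per-step cost gives a polynomial total, completing the proof.

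\emph{Main obstacle.} The delicate point is that CDCL can, for general inputs, take exponentially many steps even with a fixed clause set, because the solver may re-derive and re-explore assignments; the \emph{only} thing preventing blowup here is that $G$ is fixed between Learns and each Learn is counted. I expect the real work to be arguing that no unbounded thrashing occurs between two $G$-changing steps — i.e., that Decide/Propagate strictly extend $M$ and cannot loop — so that the per-episode step count is genuinely bounded by the number of atoms rather than by the number of partial assignments. Making this monotonicity precise (each non-$G$-changing, non-Backjump step strictly increases $|M|$, and Backjumps are bounded per conflict) is the crux; once that is nailed down, the counting and the per-step cost estimates are routine.
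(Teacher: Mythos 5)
Your proof follows essentially the same approach as the paper's: bound the Decide/Propagate steps between consecutive Learn/Instantiate inferences by the number of atoms (each such step strictly extends the trail), bound the Backjump steps per conflict analysis, and multiply the resulting polynomial step count by a polynomial per-step cost. If anything, you are more careful than the paper, which leaves implicit both the fact that the number of atoms remains polynomial throughout the derivation (your data-bounding step, via the observation that only Instantiate conclusions introduce new atoms) and the per-step cost analysis of the applicability checks.
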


\begin{proof}
    First we show that there are only polynomially many steps between Instantiation and Learn inferences.  We can see that between Instantiation and Learn inferences we must have a sequence of Decide and Propagate steps followed by a sequence of Backjump steps.  That sequence has at most $n$ Decide and Propagate steps, where $n$ is the number of atoms, since each Decide and Propagate step makes the model larger.  There are also at most $n$ Backjump steps, because Backjump removes the largest literal from the clause that will eventually be learned.  

    Since we assumed that the number of Instantiation and Learn inferences is polynomial, there are only polynomially many steps in the derivation.  Since each individual step can be done in polynomial time, the whole derivation can be computed in polynomial time.
\end{proof}

Next we prove some relatively simple properties of CDCL SAT sovlers.
If a clause $C$ is used to produce a literal in the Unit Propagation rule then the two largest literals in $C$ will have the same level. 

\begin{lemma}
\label{prop}
If $L' \vee L \vee \Gamma$ produced $L'$ in $M$ then 
 $Level_M(\bar{L}) = Level_M(L')$.
\end{lemma}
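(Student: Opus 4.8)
The plan is to argue by contradiction, exploiting the fact that the \textbf{Decide} rule is enabled only when no \textbf{Propagate} step is available. First I would unpack what it means for the sorted clause $L' \vee L \vee \Gamma$ to produce $L'$ in $M$: by the \textbf{Propagate} rule, $L'$ is undefined at the moment of propagation while every literal of $L \vee \Gamma$ is false in $M$, so $\bar{L}$ and each $\bar{K}$ with $K \in \Gamma$ lie in $M$. Since the clause is sorted in descending $\leq_M$-order and $L$ is its second literal, $L$ is the $\leq_M$-largest of the falsified literals; equivalently, $\bar{L}$ is the member of $\{\bar{L}\} \cup \{\bar{K} : K \in \Gamma\}$ that was added to $M$ most recently. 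Because $\bar{L}$ is set before $L'$ (the former is already false when the latter is propagated), the definition of $Level_M$ gives $Level_M(\bar{L}) \leq Level_M(L')$.

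Next I would show the reverse inequality cannot be strict. Suppose $Level_M(\bar{L}) < Level_M(L')$. Then at least one \textbf{Decide} step occurs strictly between the step that adds $\bar{L}$ and the step that propagates $L'$. Consider the state $<G, N, \emptyset>$ immediately before the first such \textbf{Decide}. At this state $\bar{L} \in N$ and, since $\bar{L}$ was the last of the clause's literals to be falsified, every $\bar{K}$ for $K \in \Gamma$ is already in $N$ as well; moreover $L'$ is still undefined in $N$, because it is not assigned until the later \textbf{Propagate} step. Consequently $Sort_N(C)$ has $L'$ as its head literal and $N \models \neg(L \vee \Gamma)$, so \textbf{Propagate} is applicable at $<G, N, \emptyset>$. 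This violates condition (1) of the \textbf{Decide} rule, which forbids a decision whenever some clause in $Sort_N(G)$ has all but its head literal false in $N$. The contradiction shows no \textbf{Decide} step separates $\bar{L}$ from $L'$, whence $Level_M(\bar{L}) = Level_M(L')$.

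The main obstacle I anticipate is the bookkeeping around the $Sort_M$/$Count_M$ machinery and potential backjumps. I must make precise that ``$L$ is the second-largest literal'' is exactly the statement that $\bar{L}$ is the most recently assigned falsified literal of the clause, so that once $\bar{L}$ enters $M$ the clause is already unit with $L'$ as its only undefined literal. I also need to confirm that no \textbf{Backjump} or \textbf{Learn} step intervenes between the falsification of $\bar{L}$ and the propagation of $L'$ in the history producing the current $M$; otherwise $\bar{L}$ could be retracted and later re-asserted at a different level. Restricting attention to the contiguous segment of the derivation within which both literals persist in $M$ --- precisely the segment governed by the \textbf{Decide}-blocking argument --- resolves this, and the remaining check that \textbf{Propagate}'s preconditions ($L' \not\in N$ and $\bar{L'} \not\in N$) hold at the candidate \textbf{Decide} state is routine, since $L'$ is undefined throughout that segment.
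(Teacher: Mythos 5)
Your proof is correct and follows essentially the same approach as the paper's: both rest on the observation that once $\bar{L}$ (the most recently falsified literal of the tail) enters the model, the clause is unit with head $L'$, and condition (1) of the \textbf{Decide} rule blocks any decision until $L'$ is propagated, forcing $Level_M(\bar{L}) = Level_M(L')$. Your version is a more detailed elaboration (splitting into the inequality $Level_M(\bar{L}) \leq Level_M(L')$ plus a contradiction argument, and flagging the persistence issues around backjumping) of the paper's terse three-sentence argument.
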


\begin{proof}
We know that $M[\bar{L}\cdots] \models \neg (L \vee \Gamma)$.  Then $L'$ will be produced sometime after $\bar{L}$ is produced, but
before a Decide inference is performed.  Therefore $Level_M(\bar{L}) = Level_M(L')$.
\end{proof}

Next we prove that conflict clauses also have their two largest literals at the same level.

\begin{lemma}
\label{conflict}
If a clause $L' \vee L \vee \Gamma$ is a conflict clause, then 
$Level_M(\bar{L}) = Level_M(\bar{L'})$.
\end{lemma}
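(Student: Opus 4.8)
The plan is to mirror the proof of Lemma~\ref{prop}, reasoning about the order in which the literals of the conflict clause had their truth values fixed in $M$. Write $C = L' \vee L \vee \Gamma$ with $Sort_M(C) = L' \vee L \vee \Gamma$, so that among the literals of $C$ the complement $\bar{L'}$ is the most recently determined in $M$ and $\bar{L}$ the second most recently determined. Since $C$ is a conflict clause we have $M \models \neg C$, hence $\bar{L'}$, $\bar{L}$, and every $\bar{K}$ with $K \in \Gamma$ lie in $M$. Because $\bar{L}$ is the second newest of these and $\bar{L'}$ the newest, every complement other than $\bar{L'}$ already occurs in the suffix $M[\bar{L}\cdots]$; that is, $M[\bar{L}\cdots] \models \neg(L \vee \Gamma)$, while $L'$ is still undefined at the moment $\bar{L}$ is added.

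The key step is then to observe that from the moment $\bar{L}$ enters $M$ until $L'$ is assigned, the clause $C$ is unit with $L'$ as its sole undefined literal, and $Sort_M(C)$ places $L'$ first. Consequently a Propagate step on $C$ is continuously available (it would produce $L'$), so the Decide rule is disabled throughout this interval, since its precondition requires that no clause $L' \vee \Gamma'$ with $M \models \neg \Gamma'$ exist. Hence no new decision level is opened between the addition of $\bar{L}$ and the determination of $L'$. Moreover $L'$ cannot itself be set by Decide for the same reason, nor by a Propagate from $C$ (that would set $L'$ true and prevent $C$ from being a conflict clause); thus $\bar{L'}$ is produced by Propagate from some other clause, at the decision level already current when $\bar{L}$ was added. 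Counting decision literals in $M[\bar{L}\cdots]$ versus $M[\bar{L'}\cdots]$, the two counts agree, so $Level_M(\bar{L'}) = Level_M(\bar{L})$.

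I expect the main obstacle to be the careful bookkeeping that no Decide inference slips in between $\bar{L}$ and $\bar{L'}$, even on atoms unrelated to $C$. This follows from the fact that Decide is enabled only when no Propagate is applicable anywhere, together with the observation that the unit clause $C$ keeps a Propagate continuously available until $L'$ is fixed; I would spell this out explicitly, since it is exactly the place where the rule preconditions carry the argument. A minor subtlety to record is that the proof uses the reading of $Sort_M$ and $\leq_M$ in terms of when each literal's truth value was determined, so that "most recently determined complement" is well defined for a fully falsified clause; once that reading is fixed, the level computation is immediate and exactly parallels Lemma~\ref{prop}.
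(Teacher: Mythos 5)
Your proof is correct and takes essentially the same approach as the paper's: its one-sentence argument (``as in the previous proof, $L'$ would have been propagated at an earlier level'') is precisely your observation that once $\bar{L}$ falsifies everything in $C$ except $L'$, a Propagate on $C$ stays available and blocks Decide until $L'$ is assigned, so $\bar{L'}$ must be set (by propagation from another clause) at the same level as $\bar{L}$. Your write-up simply makes explicit the bookkeeping and the intended reading of $Sort_M$ and $\leq_M$ that the paper leaves implicit.
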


\begin{proof}
As in the previous proof, 
if $Level_M(\bar{L}) \not= Level_M(\bar{L'})$ then $L'$ would have been propagated at an earlier level.  
\end{proof}

Conflicts at level 0 must be either unit clauses or the empty clause.

\begin{lemma}
\label{levelzero}
If there is a conflict at level 0, the learned clause is either a unit clause or an empty clause.
\end{lemma}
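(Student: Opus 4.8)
The plan is to analyze the Backjump (conflict-analysis) phase and show that, when it begins at level $0$, every intermediate clause keeps all of its literals at level $0$, which forces the phase to bottom out at a clause with at most one literal. First I would fix the meaning of ``conflict at level $0$'': since Decide is the only rule that raises $Level_M$, a conflict at level $0$ is one reached with no decision literals in $M$, so $M$ consists entirely of propagated literals, all at level $0$. Consequently the conflict clause $C$, which satisfies $M \models \neg C$, has $Level_M(\bar\ell) = 0$ for every literal $\ell \in C$.

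The key step is an invariant proved by induction on the number of Backjump steps: every clause occurring in the $LC$ component during the conflict-analysis phase has all of its literals at level $0$ (i.e.\ the complement of each literal lies in $M$ at level $0$). The base case is the conflict clause, handled above. For the inductive step, Backjump rewrites $\bar L \vee \Gamma$ (with $\bar L$ the $Sort_M$-maximal literal, so $L \in M$) using its reason $Clause(L) = L \vee \Delta$, producing $\Delta \vee \Gamma$. The literals of $\Gamma$ are at level $0$ by the induction hypothesis, and the literals of $\Delta$ are at level $0$ because $L$ was produced by $L \vee \Delta$, so $M \models \neg\Delta$ with every such complement set no later than $L$; since $L$ is at level $0$, so is each literal of $\Delta$ (this is where Lemma~\ref{prop} and the definition of $Level_M$ are used). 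Hence the resolvent again has all literals at level $0$.

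With the invariant in hand, I would read off the stopping behavior from the Backjump side condition. Backjump applies to a clause $\bar L \vee \Gamma$ exactly when some $L' \in \Gamma$ has $Level_M(\bar{L'}) = Level_M(L)$; by the invariant all these levels equal $0$, so the condition reduces to $\Gamma \neq \emptyset$. Therefore Backjump keeps firing as long as the current clause has two or more literals, and each step strictly decreases the $Count_M$ of the latest-set literal in the clause (the removed literal $L$ is replaced only by literals whose complements occur earlier in $M$), so the phase terminates. It can stop only when the clause is a unit clause, at which point Learn applies through its unit-clause branch (resetting $M' = []$) and learns a unit clause; the sole remaining possibility, a clause with no literals, is the empty clause, which triggers Fail. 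This yields the stated dichotomy.

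The \textbf{main obstacle} will be the empty-clause boundary: because the Backjump side condition requires $\Gamma$ to be nonempty, a single resolution step never outputs $\bot$, so I must argue that the empty clause arises only as the degenerate case where the conflict clause itself is already empty (equivalently, where the level-$0$ contradiction leaves no literal at which to stop). Pinning this down cleanly---together with making the termination measure rigorous and checking that no tautological merge removes the witness literal $L' \in \Gamma$ prematurely---is the delicate part; the remainder is a routine induction resting on Lemmas~\ref{prop} and~\ref{conflict}.
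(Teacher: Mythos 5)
Your proof is correct and takes essentially the same route as the paper's: a conflict at level $0$ forces every literal appearing during conflict analysis to sit at level $0$, so Backjump applies whenever the current clause has two or more literals, and hence Learn can only fire on a clause with at most one literal. The paper compresses this into two sentences ("a clause with at least two literals is only learned when its two largest literals have a different level, but at level $0$ all literals are at level $0$"), while you additionally spell out the level-$0$ invariant through the Backjump steps, a termination measure, and the empty-clause boundary case, which the paper leaves implicit.
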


\begin{proof}
    A clause with at least two literals is only learned when its two largest literals have a different level.  But in a conflict at level 0, the literals must all be at level 0, since there is no smaller level.
\end{proof}

If clause $C$ is a conflict clause not at level 0 then $C$ has at least two literals.

\begin{lemma}
\label{levelposititive}
    A conflicting clause at level greater than 0 
    has at least two literals.
\end{lemma}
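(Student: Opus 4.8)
The plan is to prove the contrapositive: a conflicting clause $C$ with at most one literal is necessarily a conflict at level $0$. Let $<G,M,\emptyset>$ be a state in which $C$ is conflicting, so $C \in G$ and $M \models \neg C$. If $C = \bot$ then it has no literals and its level is vacuously $0$ (and it is in any case handled by \textbf{Fail}), so the only real case is a unit clause $C = \bar L$. Here $M \models \neg C$ forces $L \in M$, and the level of the conflict is $Level_M(L)$; I must therefore show $Level_M(L) = 0$.

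The engine of the argument is the side condition on \textbf{Decide}, which (as the surrounding text states) lets it fire only from states where none of the rules below it apply; since we are in a state with empty $LC$, this means only when neither \textbf{Propagate} nor \textbf{Conflict} is applicable. A unit clause $\bar L \in G$, written in the form $\bar L \vee \Gamma$ with $\Gamma = \emptyset$, always satisfies $M \models \neg\Gamma$ vacuously, so while the atom of $L$ is undefined it enables \textbf{Propagate}, and once that atom is set to $L$ it enables \textbf{Conflict}. Thus no \textbf{Decide} step is possible while $\bar L \in G$ and the atom of $L$ is undefined. Consequently the decision level cannot be raised above $0$ before that atom is assigned, and the assignment is made at level $0$: either $\bar L$ is propagated, in which case $C$ is satisfied and not conflicting, or $L$ is propagated from another clause at level $0$, making $C$ a conflict at level $0$. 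In either case $Level_M(L)=0$, as required.

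The point needing care -- and the main obstacle -- is that $G$ grows during a derivation, so a unit clause may enter $G$ only after a trail at positive level has been built, and \textbf{Learn}, \textbf{Backjump} and \textbf{Instantiate} also truncate the trail. This is handled by the reset convention in the rules: whenever a unit clause is produced, both \textbf{Learn} and \textbf{Instantiate} set $M' = []$, returning the solver to level $0$, so the argument of the previous paragraph applies from the moment $\bar L$ first appears in $G$. I would package this as an invariant maintained along every derivation -- \emph{if the current level is at least $1$ then every unit clause $\bar L \in G$ already satisfies $\bar L \in M$ with $Level_M(\bar L)=0$} -- and verify it is preserved by each rule, using that the truncating rules discard only the suffix of $M$ above the backjump point while level-$0$ literals sit at the bottom of the trail. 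Given the invariant, a unit clause can never be conflicting at positive level, since $\bar L \in M$ would contradict the $L \in M$ demanded by a conflict; this is precisely the contrapositive of the lemma, and it meshes with Lemma~\ref{conflict} and Lemma~\ref{levelzero}, which treat the two largest literals of a genuine ($\geq 2$-literal) conflict.
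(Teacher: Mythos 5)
Your proof is correct and takes essentially the same route as the paper's, which compresses the whole argument into a single sentence: ``a clause with only one literal would have been learned at level 0.'' The machinery you make explicit --- the reset convention $M' = []$ in \textbf{Learn}/\textbf{Instantiate} for unit clauses, the \textbf{Decide} precondition forcing a unit clause's literal to be assigned (by propagation or conflict) at level $0$, and the invariant that level-$0$ assignments survive trail truncation --- is exactly what that sentence leaves implicit.
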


\begin{proof}
    A clause with only one literal would have been learned at level 0.
\end{proof}


For Horn clauses, we rely on our assumption that the initial decision about an atom is to make it false.  It is well-known that satisfiability of ground Horn clauses can be decided in polynomial time, but we are not aware of any results that CDCL SAT solvers solve Horn clauses in polynomial time.  




For Horn clauses, we need one more lemma saying that all conflicts occur at level 0.

\begin{lemma}
\label{hornlemma}
    If all clauses in $S$ are Horn clauses then there are no conflicts at a level greater than 0.
\end{lemma}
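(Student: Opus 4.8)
The plan is to prove a structural invariant on the partial model $M$ and then read off the conflict-level statement from it. The central invariant is: \emph{at every reachable state, every atom that is true in $M$ (i.e.\ every positive literal of $M$) was assigned at level $0$.} Intuitively this holds because the Decide rule only ever adds a negative literal, so atoms can be made true only by Propagate, and a positive propagation can only be enabled before the first Decide.

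First I would record that the Horn property is preserved throughout the derivation. The only clauses added to $G$ are the conclusions of Backjump, which are resolvents of clauses already in $G$, and the resolvent of two Horn clauses is again Horn: the resolved-upon positive literal is the unique positive literal of one premise and is deleted, so the resolvent inherits at most one positive literal from the other premise. Hence every $C \in G$, including learned clauses, is Horn, and the conflict analysis below applies to all of them.

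Next I would prove the invariant by induction on the derivation, the only nontrivial rule being a Propagate that sets a positive literal $B$ true from a Horn clause $B \vee \neg A_1 \vee \cdots \vee \neg A_k$. Such a step requires every $A_i$ to be true in $M$, hence, by the induction hypothesis, each $A_i$ is assigned at level $0$, i.e.\ before the first Decide. Suppose, for contradiction, that some Decide precedes the assignment of $B$; then at the first such Decide all the $A_i$ are already true while $B$ is still undefined, so this clause is unit and Propagate is applicable, forbidding Decide. Hence no Decide precedes the setting of $B$, so $B$ too is assigned at level $0$, and the invariant is preserved. The remaining rules (Decide, negative Propagate, Backjump, Learn) never create a true atom and only prepend to or truncate $M$, so they preserve the invariant trivially.

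The same priority argument yields the slightly stronger fact I actually need: for any Horn clause all of whose negated atoms $A_i$ are true in $M$, the head literal $B$ (if present) must be \emph{defined} at level $0$ as well, since otherwise the clause would be unit in $B$ and block the first Decide. With this in hand the conflict statement is immediate. If Conflict fires on $C \in G$ with $M \models \neg C$, then $C$ is Horn and falsified. If $C = \neg A_1 \vee \cdots \vee \neg A_k$ is all-negative, every $A_i$ is true, hence at level $0$, so every literal of $C$ is at level $0$ and the conflict is at level $0$. If $C = B \vee \neg A_1 \vee \cdots \vee \neg A_k$ has a head, then all $A_i$ are true (level $0$) and $B$ is false; by the strengthened fact $B$ is defined at level $0$, so again all literals of $C$ are at level $0$. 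In either case the conflict occurs at level $0$. The main obstacle is this head case: one must rule out a positive literal being set false at a positive level while its antecedents are already true, and the resolution is precisely the observation that an enabled positive propagation takes priority over Decide, so the head can never be left undefined across a decision.
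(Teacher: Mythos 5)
Your overall strategy --- establish the invariant that every atom assigned \emph{true} lives at level $0$, then conclude that a falsified Horn clause has all of its literals at level $0$ --- is sound and close in spirit to the paper's proof, which phrases essentially the same invariant as ``all literals decided or propagated at a level greater than $0$ are negative'' and then gets its contradiction from Lemma~\ref{conflict}: a conflict clause's two largest literals share a level, so a conflict at positive level would force two positive literals into one Horn clause. Your Horn-preservation argument for learned clauses (resolvents of Horn clauses are Horn) is correct and even more self-contained than the paper's (which argues learned clauses arise only at level $0$, hence are unit or empty). Note, however, that your statement ``the only clauses added to $G$ are the conclusions of Backjump'' is false: Instantiate also adds clauses to $G$. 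That omission is harmless for Horn preservation (instances of Horn clauses are Horn), but it is symptomatic of the real problem.

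The genuine gap is in the inductive step for positive Propagate and in your ``strengthened fact'': both rest on the claim that an enabled positive propagation blocks the first Decide. That claim presupposes that the propagating clause was already in $G$ when that Decide was performed, but in this calculus $G$ grows during the derivation: Instantiate (which your induction never mentions) and Learn can insert a clause \emph{after} the decision it would have blocked. Concretely: propagate $A$ at level $0$, Decide $\neg D$, then Instantiate adds $B \vee \neg A$. At this moment every antecedent of the clause is true at level $0$ and $B$ is undefined, yet no Decide was ever ``forbidden,'' so your priority argument gives nothing. What saves the lemma is precisely conditions 2--4 of Instantiate (and the analogous conditions of Learn): since $M \models \neg(\neg A)$, the rule truncates the trail to $M[A\cdots]$, i.e.\ back to level $0$, and only then can $B$ be propagated --- at level $0$. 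This is not cosmetic: in a variant of the calculus where Instantiate merely added the clause without backing up, $B$ would be propagated at level $1$, and a ground Horn clause $\neg B \vee \neg A$ would then give a conflict at level $1$, falsifying the very statement you are proving. So any correct proof must invoke the trail truncation built into Instantiate and Learn (the paper encapsulates this behaviour in Lemmas~\ref{prop} and~\ref{conflict} and routes the Horn argument through them); your induction as written does not, and it is incomplete at exactly the point where this calculus differs from a plain CDCL SAT solver.
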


\begin{proof}
        Since all non-ground clauses in $S$ are Horn clauses, all instantiated clauses are Horn clauses.  Also, all clauses learned at level 0 have fewer than two literals, so they are Horn clauses.

        The first literal decided at a level greater than 0 must be a negative literal.  By induction, and the fact that all clauses are Horn, all propagated literals must also be negative.  By Lemma ~\ref{conflict}, a conflict clause must be of the form $L' \vee L \vee \Gamma$, where $Level_M(\bar{L'}) = Level_M(\bar{L}) > 0$.  Since all literals propagated at level > 0 are negative, then $L'$ and $L$ are positive.  So the conflict clause is not Horn.  
\end{proof}

\begin{theorem}
Let $<$ be a polynomial ordering.
Let $Sel$ be a valid selection function such that all maximal literals are selected in each clause of $ng(S)$, with $Trig = Sel$.  
If $S$ is saturated by Resolution and Factoring,
and $S$ only contains Horn clauses,  then every derivation from $S$ can be computed in polynomial time, if we consider $ng(S)$ to be fixed.
        If $S$ is unsatisfiable then every derivation from $S$ terminates with $UNSAT$.  If $S$ is satisfiable then every derivation from $S$ terminates with $(SAT,M)$, where $M$ is a model of $S$.

\end{theorem}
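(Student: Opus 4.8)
The plan is to separate correctness from the running-time bound. For the $UNSAT$/$SAT$ correctness I would appeal to the earlier theorem for orderings order isomorphic to $\omega$, whose hypotheses our statement specializes: a polynomial ordering is totalizable and well founded with only finitely many atoms below each atom, so its restriction to ground atoms is order isomorphic to an initial segment of $\omega$. Hence that theorem gives that a halting derivation returns $UNSAT$ exactly when $S$ is unsatisfiable, and otherwise returns $(SAT,M)$ with $M$ a model of $S$ (the model being obtained from complete saturation via Corollary~\ref{completeness}, and the $UNSAT$ case by soundness). Termination will follow a fortiori once the polynomial step bound is established, so the genuinely new content is the running-time bound, which is where I would spend the rest of the proof.

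By Lemma~\ref{poly} it suffices to bound the number of Instantiation steps and the number of Learn steps by a polynomial, measured in $|g(S)|$ with $ng(S)$ held fixed. First I would bound the set of atoms that can ever occur. As observed in the proof of the $\omega$-ordering theorem, because all maximal literals are selected and $<$ is an atom ordering, each Instantiation introduces only atoms strictly smaller than a matched model atom, so every atom appearing in a derivation is smaller than some initial atom of $g(S)$. Since $<$ is a polynomial ordering, each initial atom has only polynomially many smaller atoms; summing over the $O(|g(S)|)$ initial atoms bounds the total set $\mathcal{A}$ of occurring atoms by a polynomial. Every clause added by Instantiation is a ground instance of one of the finitely many, constant-length clauses of $ng(S)$ with all literals drawn from $\mathcal{A}$, and the Instantiate rule fires only on clauses $C \notin G$; there are at most polynomially many such distinct ground clauses, so there are polynomially many Instantiation steps.

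The more delicate bound is on Learn steps, and this is where the Horn hypothesis is essential. By Lemma~\ref{hornlemma} every conflict occurs at level $0$, so by Lemma~\ref{levelzero} every learned clause is a unit clause or $\bot$. At most one empty clause is ever learned, since learning $\bot$ ends the derivation with Fail. For the unit clauses I would argue that the literals forced at level $0$ grow monotonically and strictly: when a unit clause $L$ is learned the state resets to $M'=[]$ and $L$, together with all previously learned units, is propagated at level $0$, and a unit whose literal is already forced at level $0$ cannot recur as a level-$0$ conflict because its complement is never set there. Thus each unit Learn fixes a new literal over $\mathcal{A}$, giving at most $2|\mathcal{A}|$ unit Learns. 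Combining the two polynomial bounds with Lemma~\ref{poly} yields that every derivation is computable in polynomial time.

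I expect the main obstacle to be the Learn bound of the last paragraph: making precise, purely in terms of the Decide/Propagate/Backjump/Learn transitions of the inference system, the invariant that level-$0$ consequences persist and that no unit clause is learned twice, rather than leaning on the usual implication-graph intuition. The atom counting and the Instantiation bound are comparatively routine given the ``all maximal literals selected'' and polynomial-ordering hypotheses, and the correctness claims are inherited from the already-proved $\omega$-ordering case.
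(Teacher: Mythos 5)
Your proposal follows essentially the same route as the paper's proof: polynomially many ground atoms (hence polynomially many Instantiation steps) from the polynomial ordering together with maximal-literal selection, Lemma~\ref{hornlemma} and Lemma~\ref{levelzero} to confine learned clauses to unit clauses or $\bot$, Lemma~\ref{poly} to conclude polynomial time, and soundness plus Corollary~\ref{completeness} (equivalently, inheritance from the $\omega$-ordering theorem) for the UNSAT/SAT correctness. The only difference is that you try to justify explicitly why each unit clause is learned at most once --- a point the paper's proof simply takes for granted when it asserts there are ``only a linear number'' of Learn inferences --- so your version is, if anything, slightly more careful on that count.
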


\begin{proof}
Because the ordering is a polynomial ordering, only polynomially many new ground atoms can be created.  Therefore only polynomially many ground atoms can exist.  Each Instantiation matches a set of literals from a non-ground clause with ground literals, so there are only a polynomial number of Instantations, since $ng(S)$ is fixed and therefore the number of selected literals in each clause is fixed.

Since all clauses are Horn clauses, Instantiation also produces Horn clauses, so all ground clauses are Horn clauses.  By Lemma~\ref{hornlemma}, the only conflicts that can occur are at level 0.  By Lemma~\ref{levelzero}, the learned clauses must be unit clauses or the empty clause.  There are only a linear number of those, so there are only a linear number of Learn inferences.  By Lemma~\ref{poly}, each derivation can be computed in polynomial time.

        The UNSAT case gives a correct answer by soundness.  For the SAT case note that we have constructed a model of all the ground clauses.  So the clauses are completely saturated.  By Corollary~\ref{completeness}, $S$ is satisfiable, and $M$ is a model of $S$.

\end{proof}

Clause learning is crucial for 2SAT.
Satisfiability of 2SAT 
can be decided in polynomial time, but we have not seen any results that CDCL SAT solvers solve 2SAT in polynomial time.  

\begin{lemma}
\label{2SATlemma}
    If all clauses in $S$ have at most two literals then all learned clauses have fewer than two literals.
    \end{lemma}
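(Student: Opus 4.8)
The plan is to follow the clause that the Backjump rule manipulates during conflict analysis and to show that, for binary clauses, it can never get ``stuck'' with two literals. First I would record the invariant that every clause ever placed in $G$ has at most two literals: the clauses of $g(S)$ do by hypothesis, Instantiation only applies a substitution to a clause of $ng(S)$ and so preserves the literal count, and any previously learned clause is unit or empty by induction on the derivation. Thus throughout the derivation the current conflict clause and every reason clause $Clause(L)$ has at most two literals, so each Backjump step resolves a clause of at most two literals against a reason clause of at most two literals and hence produces a clause with at most two literals.

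The heart of the argument is to show that during conflict analysis every literal of the working clause lies at the current conflict level $\ell$. I would prove this by induction on the Backjump steps. For the base case, a conflict at level $0$ is handled directly by Lemma~\ref{levelzero}, while a conflict at level $\ell>0$ has exactly two literals (Lemma~\ref{levelposititive} together with the invariant above), and Lemma~\ref{conflict} places both of them at level $\ell$. For the inductive step, suppose the working clause is $\bar L \vee \Gamma$ with both literals at level $\ell$, and Backjump resolves away the largest literal $\bar L$ using $Clause(L) = L \vee \Delta$. Because the reason clause is binary and produced $L$, Lemma~\ref{prop} forces $\Delta$ to be a single literal whose complement sits at level $Level_M(L)=\ell$; and $\Gamma$ is the single remaining literal of the old clause, already at level $\ell$. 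Hence the resolvent $\Delta \vee \Gamma$ again has all of its literals at level $\ell$ (collapsing to one literal if $\Delta$ and $\Gamma$ coincide; the two cannot be complementary, since both are false in $M$). In particular, resolution never introduces a literal from a strictly lower level.

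With this invariant in hand, I would argue that Backjump only stops once the clause has fewer than two literals. Suppose the working clause has exactly two literals, both at level $\ell$. Writing it as $\bar L \vee \Gamma$ with $\bar L$ the $\le_M$-largest literal, the literal in $\Gamma$ is at level $\ell = Level_M(L)$, so condition (2) of Backjump holds. Moreover $L$ must have been propagated rather than decided: the unique decision literal of level $\ell$ is the oldest literal set at that level, so its complement would be the $\le_M$-smallest literal of the clause, contradicting its being the largest. Thus $Clause(L)$ exists, condition (3) holds, and Backjump applies. Consequently Backjump cannot halt at a two-literal clause, so the clause that is finally learned has fewer than two literals.

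The main obstacle is the second paragraph: showing that no lower-level literal ever creeps into the working clause. This is exactly the feature that fails for general CDCL, where learned clauses routinely carry literals from several levels, and it is what makes $2$SAT special; the argument leans essentially on the fact that in a binary problem every reason clause is itself binary with both literals at the conflict level, which is supplied by Lemma~\ref{prop}. Once this is secured, the conclusion that the clause must shrink to a unit or empty clause before Backjump terminates is immediate.
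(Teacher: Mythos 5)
Your proof is correct and takes essentially the same route as the paper's: the same invariant that all clauses remain binary, the same appeal to Lemmas~\ref{levelzero}, \ref{levelposititive}, \ref{conflict} and \ref{prop}, and the same induction showing that Backjump preserves the property that all literals of the working clause sit at the conflict level, which forces the learned clause below two literals. If anything, you are more careful than the paper in verifying that Backjump actually applies to a same-level binary clause (ruling out that $L$ was a decision literal, so that $Clause(L)$ exists), a step the paper's proof leaves implicit.
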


\begin{proof}
    Since all non-ground clauses in $S$ have at most two literals, all instantiated clauses have at most two literals. 
    
    By Lemma~\ref{levelzero}, a clause learned at level 0 has zero or one literal.  So we consider a clause $C$ conflicting at a level greater than 0.  By Lemma~\ref{levelposititive}, $C = L' \vee L$, and by Lemma~\ref{conflict}, $Level_M(\bar{L'}) = Level_M(\bar{L})$. A clause used for Backjump must also have all its literals at the same level by Lemma~\ref{prop}, and the fact that all clauses have at most two literals.  So, by induction, a series of Backjumps will yield a clause with all its literals at the same level.  Since a learned clause with two literals must have those literals at different levels, a learned clause cannot have two literals.    
\end{proof}

\begin{theorem}
Let $<$ be a polynomial ordering.
Let $Sel$ be a valid selection function such that all maximal literals are selected in each clause of $ng(S)$, with $Trig = Sel$.  
If $S$ is saturated by Resolution and Factoring,
and $S$ only contains clauses with at most two literals,  then every derivation from $S$ can be computed in polynomial time.
        If $S$ is unsatisfiable then every derivation from $S$ terminates with $UNSAT$.  If $S$ is satisfiable then every derivation from $S$ terminates with $(SAT,M)$, where $M$ is a model of $S$.
\end{theorem}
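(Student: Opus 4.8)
The plan is to follow the same skeleton as the preceding Horn-clause theorem, substituting Lemma~\ref{2SATlemma} for Lemma~\ref{hornlemma}. First I would bound the ground search space. Since $<$ is a polynomial ordering and all maximal literals are selected, every Instantiation produces only atoms strictly smaller than an atom already present, so only polynomially many distinct ground atoms ever appear. Because $ng(S)$ is fixed, the number of selected literals in each non-ground clause is a constant, so each Instantiation matches a fixed-size tuple of ground literals against the available atoms; hence there are only polynomially many Instantiation inferences. All clauses have at most two literals, and Instantiation preserves this, so every ground clause in $G$ has at most two literals throughout the derivation.

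Next I would bound the number of Learn inferences. Unlike the Horn case, conflicts at levels greater than $0$ can occur here, so I cannot appeal to Lemma~\ref{hornlemma}. Instead, Lemma~\ref{2SATlemma} tells us directly that every learned clause is either a unit clause or $\bot$. Since each such clause is determined by one of the polynomially many atoms together with a sign, or is the empty clause, and a learned unit clause pins its atom at level $0$ for the remainder of the derivation, only polynomially many distinct clauses can be learned. Thus there are polynomially many Learn inferences, and together with the polynomial bound on Instantiation inferences, Lemma~\ref{poly} yields that the whole derivation is computable in polynomial time.

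For correctness I would reuse the argument from the earlier theorems. Termination guarantees that the derivation ends in $UNSAT$ or $(SAT,M)$; the $UNSAT$ answer is correct by soundness of the CDCL rules; and in the $SAT$ case $M$ is a model of every ground clause, so $S$ is completely saturated, whence Corollary~\ref{completeness} gives that $S$ is satisfiable and that $M$, extended by $M^{\N}$, is a model of $S$.

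The main obstacle is the Learn-inference count, which is exactly what Lemma~\ref{2SATlemma} is engineered to control. Its force is that a chain of Backjump steps on two-literal clauses whose literals share a level produces resolvents that again have all literals at a common level, collapsing to a unit clause before Learn fires; this is where clause learning is essential and is the one place the two-literal hypothesis does real work beyond the trivial preservation facts. Everything else is a routine transcription of the Horn proof.
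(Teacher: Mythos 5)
Your proposal is correct and follows essentially the same route as the paper's proof: bound the ground atoms via the polynomial ordering, bound Instantiation inferences since $ng(S)$ is fixed, invoke Lemma~\ref{2SATlemma} to restrict learned clauses to units or $\bot$ (giving linearly many Learn steps), apply Lemma~\ref{poly} for the time bound, and finish correctness via soundness and Corollary~\ref{completeness}. Your extra remark that a learned unit clause fixes its atom thereafter is a small sharpening of the paper's implicit assumption that each unit clause is learned only once, but it does not change the structure of the argument.
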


\begin{proof}
Because the ordering is a polynomial ordering, only polynomially many new ground atoms can be created.  Therefore only polynomially many ground atoms can exist.  Each Instantiation matches a set of literals from a non-ground clause with ground literals, so there are only a polynomial number of Instantations.

Since all clauses have at most two literals, Instantiation also produces clauses with at most two literals, so all ground clauses have at most two literals.  By Lemma~\ref{2SATlemma}, all learned clauses have fewer than two literals.
There are only a linear number of those, so there are only a linear number of Learn inferences.  By Lemma~\ref{poly}, each derivation can be computed in polynomial time.

        The UNSAT case gives a correct answer by soundness.  For the SAT case note that we have constructed a model of all the ground clauses.  So the clauses are completely saturated.  By Corollary~\ref{completeness}, $S$ is satisfiable, and $M$ is a model of $S$.
\end{proof}

Let us look at Example~\ref{goodsel}, with selection and trigger function $Trig_1$, 
with $p(s_1,s_2) \leq p(t_1,t_2)$ if $s_1$ is a subterm of $t_1$ and $s_2$ is a subterm of $t_2$. We do the same for $q$.
This is a set of Horn clauses (and also 2SAT), so CDCL SAT solving plus Instantiation will solve this theory in polynomial time.  

\section{Related Work}
The paper \cite{DBLP:journals/jar/DrossCKP16}
discusses how a good selection of triggers will give a decision procedure.  Their approach is somewhat different from ours.  The user needs to supply a correctness and termination proof that the trigger choice will give a decision procedure.  Our method is automatic, and inherits the trigger selection function directly from the selection function used in saturation.
Good trigger selection is discussed from a practical point of view in 
\cite{DBLP:conf/cav/LeinoP16,10.1145/1670412.1670416}.  

Other papers suggest other approaches to quantifiers instead of triggers.  Some successful approaches are Model-Based Quantifier Instantiation \cite{DBLP:conf/cav/GeM09} for satisfiable problems, and Conflicting Instances\cite{DBLP:conf/tacas/BarbosaFR17,DBLP:journals/tplp/ReynoldsTB17} for unsatisfiable problems.
Several other approaches have been proposed and implemented \cite{DBLP:conf/lpar/Rummer12,DBLP:conf/tacas/ReynoldsBF18,DBLP:conf/frocos/FontaineS21,DBLP:conf/fmcad/ReynoldsTM14,DBLP:conf/tacas/NiemetzPRBT21,DBLP:conf/vmcai/HoenickeS21}. 
Our paper only  deals with first-order theories without equality,
whereas the above mentioned papers consider other SMT theories.

Other papers have used Saturation under Ordered Resolution \cite{DBLP:journals/jacm/BasinG01},
as a way to show that a first-order Theory is a Local Theory\cite{DBLP:journals/tocl/GivanM02}
meaning that the only instantiation necessary are to replace variables with terms smaller than initial ground terms.  In this approach, all possible instantiations are made at the beginning.
This approach was further extended in \cite{DBLP:conf/cade/Sofronie-Stokkermans05} to cover other theories, and was extended further in several papers, e.g. \cite{DBLP:conf/tacas/IhlemannJS08}.  But these extensions still require instantiating all the instances at the beginning.  Finally, in \cite{DBLP:conf/cav/Jacobs09}, an approach was implemented where 
instantiations are only made when necessary.  But that approach is based on the instance generation method of \cite{DBLP:conf/lics/GanzingerK03,DBLP:conf/cade/Korovin08}, which is not the same as the SMT method.  Finally, in \cite{DBLP:conf/cav/BansalR0BW15}, the local theory method was implemented in an SMT setting.  These ideas don't involve triggers.

Another related technique is for an SMT solver to call a first-order theorem prover \cite{DBLP:journals/jar/LynchTT13,DBLP:conf/cade/MouraB08a,DBLP:conf/cav/Voronkov14}.  Our method is different in that we do not need a first-order theorem prover after saturation of the first-order clauses.


\section{Conclusion}
We analyzed the completeness of the trigger selection function for SMT solvers with only a first-order theory and no other theories. If the first-order theory is saturated by Resolution and Factoring, with a valid selection function identical to the trigger function, then if Saturation by Instantiation gives a model of the ground clauses, that is also a model of those clauses modulo the first-order theory.  Saturation by Instantiation is guaranteed to halt if the Selection function selects a single maximum literal in each clause, or if all maximal clauses are selected using an ordering isomorphic to $\omega$.  If it is also a polynomial ordering, then Saturation by Instantiation is guaranteed to halt in polynomial time if all clauses are Horn Clauses, or if all clauses contain at most two literals.

We began this research by noticing that SMT solvers return "unknown" on problems that seem to be easily shown to be satisfiable.  We hope that implementers of SMT solvers will use our results to return "satisfiable" in more cases.  It requires no change to the SMT process.  The only change is in the pre-processing, where the SMT solver checks if the first-order classes are saturated by a valid selection function, and uses the identical trigger function.  At the end, if no contradiction is found, the SAT solver will return "satisfiable", and also return a model modulo the first-order theory.

We have implemented an SMT solver that, given a satisfiable saturated first-order theory, will detect satisfiability and return a ground model.  We experimented with our SMT solver using some first-order theories presented in the appendices.  Since this is a new SMT solver, we don't expect it to be competitive in speed with existing SMT solvers.  However, this paper is not about increasing the speed of an SMT solver.  It is about making SMT solvers more precise.

We plan lots of future work on this subject.  To make this useful, we need to extend the results to more theories.  We are working on extending it to equality with uninterpreted function symbols.  
Possibly, previous research from  \cite{DBLP:conf/frocos/HorbachS13,DBLP:journals/iandc/LynchRRT11,DBLP:conf/csl/Lynch04,DBLP:journals/jar/BonacinaLM11} 
could help with this.
Later work will be to extend it to other specialized theories.  

Even in the non-equational case, there are many unanswered questions.  For example, can this be extended to theories which cannot be saturated under Resolution.  These results basically give Herbrand models.  There may be ways to use other models to strengthen these results.  
There are several more detailed results that are not answered in this paper.  Does the proof technique work for all cases of redundancy, not just subsumption and tautology deletion?  Other decision procedures may be possible by loosening the restrictions on the ordering.


\bibliographystyle{unsrtnat}
\bibliography{references}  

\appendix

\section{Subsumption Theory}

The theory of Subsumption is given below.  First we define the matching predicate $m$, for a theory with a binary predicate symbol $f$, a unary predicate symbol $g$ and a constant $a$.  We assume there are no other symbols in the terms except for variables.  Instead of including type information, we create a unary function symbol $v$ so that a variable $X$ is represented as $v(x)$.  Then $m(s,t)$ is true if and only if $s$ matches $t$, and the model will give the assignment for the variables.

We also include a predicate $s$ for subsumption.  Clauses are represented by lists of atoms (we don't deal with negation, but that could easily be added).  A list is recursively defined as either "nil" or a pair $p(s,t)$ where $s$ is an atom and $t$ is a list.  In this theory, for subsumption, the second clause must contain at most two atoms, but it is easy to extend it to more atoms.  Then $s(t_1,t_2)$ is true if and only if clause $t_1$ subsumes clause $t_2$.

This set of clauses is saturated by Resolution and Factoring.
The trigger in every clause is the literal with the most symbols.  The only exception is Clause 13, where two literals are selected, because they both have the most symbols.  This determines a polynomial ordering so this gives a decision procedure for this theory.  It is a polynomial time procedure for the theory of matching, since all matching clauses (the first 13 clauses) are Horn.  But it is not a polynomial time procedure for subsumption, since there are non-Horn clauses and there are clauses with more than two literals.  In fact, the theory of subsumption is NP-complete.

Also see \cite{DBLP:conf/fmcad/RathBK22} for another paper on representing subsumption using SAT.  That paper does not use a first-order theory.

\begin{enumerate}
\item
$\neg m( f(X_1, Y_1), f(X_2 , Y_2)) \vee m( X_1,X_2 )$
\item
$\neg m( f(X_1 , Y_1), f(X_2 , Y_2 )) \vee m( Y_1,Y_2 )$
\item
$\neg m( X_1, X_2 ) \vee \neg m( Y_1, Y_2 ) \vee m( f(X_1 , Y_1), f(X_2 , Y_2 ))$

\item
$\neg m( g(X), g(Y) ) \vee m( X, Y )$
\item
$\neg m( X, Y) \vee m( g(X), g(Y) )$
 
\item
$m( a, a )$

\item
$\neg m( f(X,Y), a )$
\item
$\neg m( a, f(X,Y) )$
\item
$\neg m( f(X,Y), g(Z) )$
\item
$\neg m( g(Z), f(X,Y) )$
\item
$\neg m( g(X), a )$
\item
$\neg m( a, g(X) )$

\item
$\neg m( v(X), Y ) \vee \neg m( v(X), Z) \vee m( Y, Z )$

\item
$s(nil,C)$
\item
$\neg s( p(X,Y), nil )$

\item
$ \neg s( p(L,C) , p(K,nil)) \vee m(L,K)$
\item
$ \neg s( p(L,C) , p(K,nil)) \vee s(C,nil) $
\item
$ \neg m(L,K) \vee \neg s(C,nil)  \vee s( p(L,C), p(K,nil)) $

\item
$ \neg s( p(L,C) , p(K1,p(K2,nil)) \vee m(L,K1) \vee m(L,K2) $
\item
$ \neg s( p(L,C) , p(K1,p(K2,nil)) \vee m(L,K1) \vee s(C,p(K1,nil)) $
\item 
$ \neg s( p(L,C) , p(K1,p(K2,nil)) \vee s(C,p(K2,nil)) \vee m(L,K2) $
\item
$ \neg s( p(L,C) , p(K1,p(K2,nil)) \vee s(C,p(K2,nil)) \vee   s(C,p(K1,nil)) $

\item
$ \neg m(L,K1) \vee \neg s(C,p(K2,nil)) \vee s( p(L,C) ) , p(K1,p(K2,nil)) $
\item
$ \neg m(L,K2) \vee \neg s(C,p(K1,nil)) \vee s( p(L,C) ) , p(K1,p(K2,nil)) $

\end{enumerate}

\section{Set Theory}

Below we present a subset of Set Theory without Subset.
We define the "union", "intersect" and "complement" function.  $mem(s,t)$ is true if and only if $s$ is a member of $t$.  We did not include the "subset" function, because we were unable to saturate that theory.

Inspired by \cite{DBLP:journals/cacm/HeuleK17}, we extended the theory to solve the triple sum problem.  The problem we considered is a set of triples, where each element is an integer from $1$ to $n$.  We have three sets: $a$, $b$, and $c$.  We want to distribute each triple $(m,n,p)$ into these sets so that $m$, $n$ and $p$ are not all in the same set.  We write $number(n)$ to indicate that $n$ is a number.  We write $triple(m,n,p)$ to indicate that $(m,n,p)$ is a triple.  $distinct(m,n)$ means that $m$ and $n$ are in different sets.  $both(m,n,s)$ means that $m$ and $n$ are both in set $s$.

This set of clauses is saturated by Resolution and Factoring.  There is a way to select a literal in each clause with an ordering that is order isomorphic to $\omega$.  But some clauses are not Horn, and some clauses have more than two literals.  In \cite{DBLP:journals/cacm/HeuleK17}, the problem for two sets is represented as a SAT problem without a first-order theory.  It could easily be extended to cover three sets.

\begin{enumerate}
\item
$ \neg mem(E,X) \vee mem(E,union(X,Y)) $
\item
$ \neg mem(E,Y) \vee mem(E,union(X,Y)) $
\item
$ \neg mem(E,union(X,Y)) \vee mem(E,X) \vee mem(E,Y) $
\item
$ \neg mem(E,int(X,Y)) \vee mem(E,X) $
\item
$ \neg mem(E,int(X,Y)) \vee mem(E,Y) $
\item
$ \neg mem(E,X) \vee \neg mem(E,Y) \vee mem(E,int(X,Y)) $
\item
$ mem(E,comp(X)) \vee mem(E,X) $
\item
$ \neg mem(E,X) \vee \neg mem(E,comp(X)) $

\item
$ \neg both(X,Y,S) \vee mem(X,S) $
\item
$ \neg both(X,Y,S) \vee mem(Y,S) $
\item
$ \neg mem(X,S) \vee \neg m(Y,S) \vee both(X,Y,S) $
\item
$ \neg distinct(X,Y) \vee \neg both(X,Y,a) $
\item
$ \neg distinct(X,Y) \vee \neg both(X,Y,b) $
\item
$ \neg distinct(X,Y) \vee \neg both(X,Y,c) $
\item
$ both(X,Y,a) \vee both(X,Y,b) \vee both(X,Y,c) \vee distinct(X,Y) $
\item
$ \neg number(X) \vee mem(X,union(union(a,b),c)) $
\item
$ \neg triple(X,Y,Z) \vee distinct(X,Y) \vee distinct(X,Z) \vee distinct(Y,Z) $
\end{enumerate}

\end{document}